\long\def\remove#1{}
\DeclareMathOperator{\dgm}{Dgm}
\DeclareMathOperator{\xiuv}{\xi_{\{u,v\}}}
\DeclareMathOperator{\maxv}{MaxV}
\DeclareMathOperator{\minv}{MinV}
\DeclareMathOperator{\maxt}{MaxT}
\DeclareMathOperator{\mint}{MinT}
\DeclareMathOperator{\lk}{Lk}
\DeclareMathOperator{\cl}{Cl}
\DeclareMathOperator{\st}{Star}
\newtheorem{theorem}{\sffamily Theorem}
\newtheorem{lemma}[theorem]{\sffamily Lemma}
\newtheorem{corollary}[theorem]{\sffamily Corollary}
\newtheorem{definition}[theorem]{\sffamily Definition}
\newtheorem{remark}[theorem]{\sffamily Remark}
\newif\ifpaper
\title{Filtration Simplification for Persistent Homology via Edge Contraction}
\begin{document}

\author{Tamal K. Dey\thanks{dey.8@osu.edu}}

\author{Ryan Slechta\thanks{slechta.3@osu.edu}}

\affil{Department of Computer Science and Engineering,
The Ohio State University, Columbus, OH 43210, USA.}

\maketitle

\setcounter{page}{1}
\begin{abstract}
Persistent homology is a popular data analysis technique that is used to capture the changing topology of a filtration associated with some simplicial complex $K$. These topological changes are summarized in persistence diagrams. We propose two contraction operators which when applied to $K$ and its associated filtration, bound the perturbation in the persistence diagrams. The first assumes that the underlying space of $K$ is a $2$-manifold and ensures that simplices are paired with the same simplices in the contracted complex as they are in the original. The second is for arbitrary $d$-complexes, and bounds the bottleneck distance between the initial and contracted $p$-dimensional persistence diagrams. This is accomplished by defining interleaving maps between persistence modules which arise from chain maps defined over the filtrations. In addition, we show how the second operator can efficiently compose across multiple contractions. We conclude with experiments demonstrating the second operator's utility on manifolds.
\end{abstract}

\section{Introduction}
\label{sec:intro}

Edge contraction is a fundamental operation which has been famously explored by the graphics and computational geometry communities when developing tools for mesh simplification \cite{linkcondition,Hoppe,polygsimplification} and by mathematicians when developing graph minor theory \cite{ROBERTSON1}. However, comparatively little work has been done to incorporate edge contraction as a tool for topological data analysis. Edge contraction has been used to compute persistent homology for simplicial maps \cite{DFW14} and to simplify discrete Morse vector fields \cite{DiscreteMorseCommute,MarylandContraction}, but no work has been done to develop a \textit{persistence-aware} contraction operator.
\textit{Persistent homology} is based on the observation that adding a simplex to a simplicial complex either creates or destroys a homology class \cite{EdelsPers}. Hence, the lifetime, or \textit{persistence}, of a class through a filtered simplicial complex can be defined as the difference in the birth time and death time of the class. In addition, this permits a pairing of simplices, where $\sigma$ is paired with $\tau$ if $\tau$ destroys the homology class created by $\sigma$. A summary of the births and deaths of homology classes is given in a \textit{persistence diagram}. We give further details in Section \ref{sec:prelim}.
 
In this paper, we aim to develop contraction operators which when applied to a filtered simplicial complex, simplify the cell structure while also controlling perturbations in the persistence diagrams associated with the complexes. We develop two such operators: one for $2$-manifolds which maintains the same pairing in the contracted complex as the original, and one for arbitrary $d$-complexes which bounds the bottleneck distance between the persistence diagrams of the original and contracted filtrations. In addition, we show how our operator for $d$-complexes composes with itself to bound perturbation across multiple contractions. We provide an implementation of the operator which controls bottleneck distance and demonstrate its utility on manifolds. 
\section{Preliminaries}
\label{sec:prelim}
Throughout this paper, we will use $K$ to refer to a finite simplicial complex of arbitrary dimension, unless otherwise specified. We assume that $K$ is equipped with a height function 
\begin{equation}
h\; : \; K \to \mathbb{R} 
\end{equation}
such that if $\sigma$ is a face of $\tau$, $h(\sigma) \leq h(\tau)$. This is equivalent to assuming that $K$ is \textit{filtered}. That is, $K$ is equipped with a sequence of subcomplexes $\{K_a\}_{a\in A}$ where $K_a \subset K_{a'}$ if $a < a'$, $|A|$ finite. In addition, for some $a\in A$, $K_a = K$. A filtration induces a height function on $K$ that respects the face poset, where the height of any particular simplex is the first index at which it occurs. Similarly, $h$ induces a filtration in a canonical way. 

The filtration $\{K_a\}_{a\in A}$ gives a natural partial order on the simplices of $K$. It induces a total order $\prec$ by giving precedence to lower dimensional simplices and arbitrarily breaking ties within each dimension. For simplices $\sigma,\tau \in K$, we write $\sigma < \tau$ if $\sigma$ is a face of $\tau$, and define $\sigma > \tau$ as expected. Similarly, we write $\sigma <_1 \tau$ if $\sigma$ is a facet of $\tau$. If $\sigma \neq \tau$ are of the same dimension and there exists a $\rho$ where $\rho <_1 \sigma$ and $\rho <_1 \tau$, then we say that $\sigma$ is \textit{incident} to $\tau$ (and vice-versa). 
\subsection{Edge Contraction}
For a filtered simplicial complex $K$, we model contracting edge $\{u,v\} \in K$ as a simplicial map 
\begin{equation}
    \xiuv \; : \; K \to \Delta
\end{equation}
where $\Delta$ is the maximal simplicial complex on the vertex set of $K$. We often denote $\xiuv(K)$ as $K'$. A height function $h' \; : \; K' \to \mathbb{R}$ is induced on $K'$ by defining $h'(\sigma) = \min\{ h(\tau) \; | \; \tau \in \xiuv^{-1}(\sigma)\}$. Equivalently, the filtration $\{\xiuv(K_a)\}_{a \in A}$ is induced on $K'$. The total order $\prec$ also induces a total order on $K'$ where if multiple simplices map to the same simplex, then the image takes the first position in the total order of its preimages. 
We will abuse notation and allow $\prec$ to refer to both the total order on $K$ and $K'$.

In this paper, when contracting $\{u,v\}$, we always assume that $u \prec v$. Following \cite{DFW14}, those simplices $\sigma$ for which $\{u,v\} < \sigma$ are called \textit{vanishing simplices}. If $\sigma$ is the face of some vanishing simplex and contains exactly one of $u$ or $v$ as a face, then $\sigma$ is a \textit{mirrored simplex}. 
\begin{remark}
Mirrored simplices come in pairs. If $\sigma$ is defined by the $n$ vertices $\{x_1, x_2, \ldots, x_{n-1}, u\}$, then the \textit{mirror} of $\sigma$, denoted $m(\sigma)$, is the simplex defined by $\{x_1,x_2,\ldots,x_{n-1},v\}$. 
\end{remark}
If $\sigma$ is either mirrored or vanishing, then $\sigma$ is a \textit{local} simplex. If a simplex $\sigma$ contains a mirrored simplex as a facet, but is not a local simplex, then $\sigma$ is said to be an \textit{adjacent simplex}. Figure \ref{fig:simpDiag} details the various types of simplices in a $2$-complex. 
\begin{figure}\centering
  \includegraphics[scale=.3]{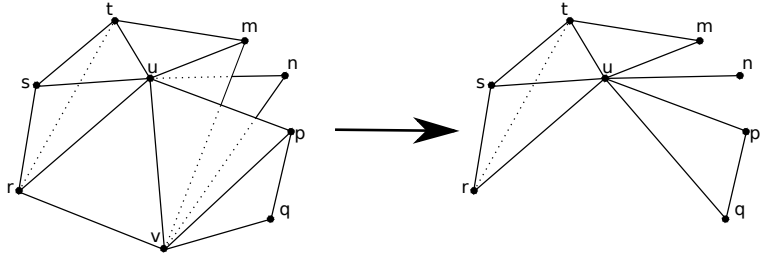}
  \caption{A simplicial $2$-complex before and after contracting edge $\{u,v\}$ where $u \prec v$. In the pre-contracted complex, edge $\{u,v\}$ and triangles $\{r,u,v\}$, $\{p,u,v\}$, $\{m,u,v\}$, and $\{n,u,v\}$ are vanishing. Vertex pair $(u,v)$ and edge pairs $(\{r,u\},\{r,v\})$, $(\{p,u\},\{p,v\})$, $(\{m,u\},\{m,v\})$, and $(\{n,u\},\{n,v\})$ are mirrored edges. Triangles $\{r,s,u\}$, $\{m,t,u\}$, $\{p,q,v\}$, $\{r,t,u\}$ and edges $\{s,u\}$, $\{t,u\}$, $\{p,v\}$ are adjacent simplices. All other simplices are generic nonlocal simplices. }
  \label{fig:simpDiag}
\end{figure}
For nonlocal simplices, $\xiuv$ is the identity. If $\sigma$ is a mirrored simplex, and $u < \sigma$, then $\xiuv(\sigma) = \xiuv(m(\sigma)) = \sigma$. Otherwise, $\xiuv(\sigma) = \xiuv(m(\sigma)) = m(\sigma)$. If $\tau$ is vanishing, then it has mirrored facet (or face of codimension one) $\sigma$. We let $\xiuv(\sigma)$ determine $\xiuv(\tau)$.

\subsection{Persistence Modules and Filtrations}

Let $\{K_a\}_{a\in A}$ denote some filtration for $K$. Note that if $a < b$, there is a natural map from the chains of $K_a$ to $K_b$. The chain maps define a map between the $p$-dimensional homology groups $H_p(K_a) \to H_p(K_b)$ (consult \cite{CompTop} for details).  These homology groups, together with all of the induced maps between them, give a \textit{persistence module}. Formally, we use the definition given by Chazal \textit{et. al.} \cite{Chazal}

\begin{definition}
Let $R$ be a commutative ring with unity, and $A$ a subset of $\mathbb{R}$. A persistence module $M_A$ is a family $\{F_\alpha\}_{\alpha \in A}$ of $R$-modules indexed by the elements of $A$, together with a family $\{f_\alpha^{\alpha'} \; : \; F_\alpha \to F_{\alpha'}\}_{\alpha \leq \alpha' \in A}$ of homomorphisms such that, $\forall \alpha \leq \alpha' \leq \alpha'' \in A, f_\alpha^{\alpha''} = f_{\alpha'}^{\alpha''} \circ f_\alpha^{\alpha'}$. 
\end{definition}

In this paper, we will only consider when $R$ is $\mathbb{Z}_2$. As mentioned earlier, $\{K_a\}_{a\in A}$ gives a persistence module

\begin{tikzcd}
\centering
M_A \; : \; H_p(K_{j_1}) \arrow[r,"f_{j_1,j_2}"]
& H_p(K_{j_2}) \arrow[r,"f_{j_2,j_3}"]
& H_p(K_{j_3}) \arrow[r,"f_{j_3,j_4}"]
& \cdots \arrow[r,"f_{j_{n-1},j_n}"]
& H_p(K_{j_n})
\end{tikzcd}

\noindent where additional maps are given by composition. Note that there is a second persistence module given by the filtration $\{\xiuv(K_a)\}_{a \in A}$. For convenience, we will define $K_a':=\xiuv(K_a)$. This gives us a module for the contracted complex:

\begin{tikzcd}
\centering
M'_A \; : \;H_k(K'_{j_1}) \arrow[r,"f'_{j_1,j_2}"]
& H_k(K'_{j_2}) \arrow[r,"f'_{j_2,j_3}"]
& H_k(K'_{j_3}) \arrow[r,"f'_{j_3,j_4}"]
& \cdots \arrow[r,"f'_{j_{n-1},j_n}"]
& H_k(K'_{j_n}).
\end{tikzcd}

\subsection{Persistence Diagrams}

A \textit{persistence diagram} captures the birth and deaths of homology classes in a corresponding persistence module. We let $\mu_p^{i,j}$ be the number of $p$-dimensional homology classes which are born in $K_i$ and die in $K_j$. This gives a formalization of a persistence diagram \cite{Cohen-Steiner}.
\begin{definition}
A persistence diagram $\dgm_p(f)$ of a filtration induced by $f$ is a multi-subset of the extended real plane, such that each point $(i,j), j > i$ has multiplicity $\mu_p^{i,j}$, and points $(i,i)$ have infinite multiplicity. 
\end{definition}
Persistence diagrams are often plotted as in Figure \ref{fig:persDiag}. Note that a persistence diagram can equivalently be thought to capture the changes in a persistence module, so for persistence module $M$ we often use the notation $\dgm_p(M)$. In addition, we can define a distance between persistence diagrams.
\begin{definition}
For persistence diagrams $\dgm_p(f), \dgm_p(g)$, let $B$ denote the set of all bijections $\gamma \; : \; \dgm_p(f) \to \dgm_p(g)$. The bottleneck distance $d_b$ is defined as
\[
    d_b(\dgm_p(f),\dgm_p(g)) = \inf_{\gamma \in B} \sup_{x \in \dgm(f)} d_{\infty}(x, \gamma(x))
\]
where $d_\infty$ denotes the infinity norm. 
\end{definition}

\begin{figure}\centering
  \includegraphics[scale=.21]{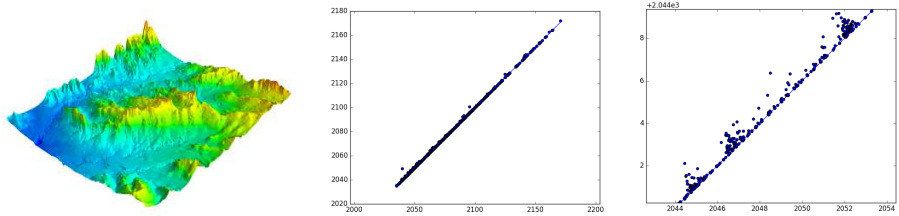}
  \caption{A triangulated terrain (left), the $0$-dimensional persistence diagram corresponding to the terrain (middle), and a closer view of the same diagram (right). The height function was extended to the entire complex such that the value at a simplex is the maximum height value of its constituent vertices.}
  \label{fig:persDiag}
\end{figure}

We encourage the reader to consult \cite{CompTop} for a more thorough treatment of persistence diagrams and bottleneck distance. Developing a contraction operator which bounds the perturbation in the bottleneck distance between the persistence diagrams of the original and contracted filtrations is one of the goals of this paper. However, this distance is quite cumbersome. Chazal et. al. showed that bottleneck distance could be directly related to the persistence module corresponding to a filtration \cite{Chazal}. This requires a notion of similarity between persistence modules. We use the definition given in \cite{Chazal}.
\begin{definition}
Two persistence modules $M_{\mathbb{R}}$ and $M'_{\mathbb{R}}$ are strongly $\epsilon$-interleaved if there exist two families of homomorphisms $\{\phi_\alpha \; : \; M_\alpha \to M'_{\alpha+\epsilon}\}_{\alpha \in \mathbb{R}}$ and $\{\psi_\alpha \; : \; M_\alpha \to M'_{\alpha + \epsilon}\}_{\alpha \in \mathbb{R}}$ such that the diagrams of Equation \ref{eqn:commute} commute $\forall \alpha \leq \alpha' \in \mathbb{R}$.
\end{definition}
\begin{equation}
\label{eqn:commute}
\begin{tikzcd}
\centering
M_{\alpha-\epsilon} \arrow[rrr] \arrow[dr] & & & M_{\alpha' + \epsilon} & & M_{\alpha + \epsilon} \arrow[r] & M_{\alpha' + \epsilon} \\
& M'_\alpha \arrow[r] & M'_{\alpha'} \arrow[ur] & & M'_{\alpha} \arrow[r] \arrow[ur] & M'_{\alpha'} \arrow[ur] & \\
& M_\alpha \arrow[r] & M_{\alpha'} \arrow[dr] & & M_{\alpha} \arrow[r] \arrow[dr] & M_{\alpha'} \arrow[dr] & \\
M'_{\alpha-\epsilon} \arrow[rrr] \arrow[ur] & & & M'_{\alpha' + \epsilon} & & M'_{\alpha + \epsilon} \arrow[r] & M'_{\alpha' + \epsilon}
\end{tikzcd}   
\end{equation}

Chazal \textit{et. al.} proved the following theorem. 
\begin{theorem}
\label{thm:bint}
Let $M_\mathbb{R}$ and $M'_\mathbb{R}$ be tame persistence modules. If $M_\mathbb{R}$ and $M'_\mathbb{R}$ are strongly $\epsilon$-interleaved, then $d_b(\dgm(M_\mathbb{R}), \dgm(M'_\mathbb{R})) \leq \epsilon$.
\end{theorem}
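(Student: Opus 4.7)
The plan is to prove the theorem by reducing, via tameness, to a combinatorial matching problem between the intervals appearing in the decompositions of $M_{\mathbb{R}}$ and $M'_{\mathbb{R}}$, then constructing a partial bijection between the points of the two persistence diagrams in which every matched pair lies within $\ell_\infty$ distance $\epsilon$ and every unmatched point lies within $\epsilon$ of the diagonal.

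First I would invoke the structure theorem for tame persistence modules: both $M_{\mathbb{R}}$ and $M'_{\mathbb{R}}$ decompose, up to isomorphism, as finite direct sums of interval modules of the form $\mathbb{Z}_2[b,d)$, and under this decomposition the diagram $\dgm(M_{\mathbb{R}})$ is literally the multiset of endpoint pairs $(b,d)$. This reduces the problem to tracking rank data, with the central invariant being $r_M(a,b) := \mathrm{rank}(M_a \to M_b)$, which in the interval decomposition coincides with the number of summands whose supporting interval contains $[a,b]$.

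Next I would derive from the commuting diagrams of Equation~\ref{eqn:commute} a pair of rectangle inequalities of the form
\[
r_M(a,b) \leq r_{M'}(a-\epsilon, b+\epsilon) \quad \text{and} \quad r_{M'}(a,b) \leq r_M(a-\epsilon, b+\epsilon),
\]
which hold because every structure map $M_a \to M_b$ factors through a shift of $M'$ (and symmetrically) via a composition of one $\phi$ and one $\psi$ furnished by the interleaving, together with an internal map of $M$. Translating these rank inequalities into statements about the multiplicities $\mu_p^{i,j}$ through a Möbius-style alternating-sum formula then yields a local comparison between the number of diagram points in corresponding $\epsilon$-shifted boxes of the extended real plane.

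The hard part will be promoting these pointwise count comparisons into an honest $\epsilon$-matching, rather than just comparing counts in each box separately. I would accomplish this with a Hall-type argument: form the bipartite graph on $\dgm(M_{\mathbb{R}}) \sqcup \dgm(M'_{\mathbb{R}})$ whose edges join $(b,d)$ to $(b',d')$ precisely when $\max(|b-b'|,|d-d'|) \leq \epsilon$, verify Hall's condition on every subset of off-diagonal points of persistence exceeding $2\epsilon$ using the rectangle inequalities applied to a suitable family of boxes, and then match each remaining short-persistence point (persistence at most $2\epsilon$) to its nearest diagonal point, which is permitted since diagonal points carry infinite multiplicity in the definition of $\dgm$. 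The resulting bijection $\gamma$ satisfies $\sup_x d_\infty(x,\gamma(x)) \leq \epsilon$, whence $d_b(\dgm(M_{\mathbb{R}}), \dgm(M'_{\mathbb{R}})) \leq \epsilon$ as required.
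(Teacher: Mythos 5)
First, a point of reference: the paper does not prove this theorem at all --- it is quoted from Chazal \emph{et al.}\ and used as a black box --- so there is no in-paper argument to compare yours against; your proposal has to stand on its own.

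It does not, because the step from the rank inequalities to the matching is broken in a way that cannot be patched. Two issues. (i) The inequality is mis-stated: with your convention that $r_M(a,b)$ counts interval summands containing $[a,b]$, the claim $r_M(a,b)\le r_{M'}(a-\epsilon,b+\epsilon)$ fails already for $M=M'$; what the interleaving actually yields is $r_M(a,b)\le r_{M'}(a+\epsilon,b-\epsilon)$ for $b-a\ge 2\epsilon$, since $M_a\to M_b$ factors through $M'_{a+\epsilon}\to M'_{b-\epsilon}$. (ii) Far more seriously, even the corrected rank inequalities do not yield the box comparisons (an alternating sum of inequalities is not an inequality), and they do not imply the theorem. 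Concretely, take $\epsilon=1$, $M=\mathbb{Z}_2[-2,8]\oplus\mathbb{Z}_2[0,10]$ and $M'=\mathbb{Z}_2[-1,9]\oplus\mathbb{Z}_2[1,7]$. A direct check shows $r_M(a,b)\le r_{M'}(a+1,b-1)$ and $r_{M'}(a,b)\le r_M(a+1,b-1)$ for all $a\le b-2$, yet $d_b=3$: the diagram points $(-2,8)$ and $(0,10)$ each have persistence $10$, and the only point of the other diagram within $\ell_\infty$-distance $1$ of either is $(-1,9)$, so your Hall condition fails for $S=\{(-2,8),(0,10)\}$. Hence no argument whose only input is the rank data can verify Hall's condition; the interleaving carries strictly more information than the rank comparisons it induces. (Accordingly, these two modules are not $1$-interleaved, even though their rank functions are $1$-compatible.) There is a second, independent gap as well: the genuine Box Lemma, which does hold for interleaved modules but must be proved from the commutativity of the interleaving diagrams rather than from rank factorization, controls counts only in single boxes, whereas Hall's condition requires lower bounds over arbitrary unions of $\epsilon$-balls. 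This is precisely the obstacle the published proofs are built to circumvent, either via the interpolation lemma (connect $M$ and $M'$ by a one-parameter family of pairwise-interleaved modules, apply an ``easy bijection'' lemma to small steps, and use the triangle inequality, as in Chazal \emph{et al.}) or via Bauer and Lesnick's induced matchings, which build the bijection directly from the images and kernels of the interleaving morphisms. To make your proof work you would need to import one of those ingredients in place of the M\"obius-plus-Hall step.
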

As $K$ is assumed to be finite, all persistence modules we consider will be tame. In addition, $\epsilon$-interleavings induce a pseudometric on the space of persistence modules called \textit{interleaving distance}. Lesnick showed in \cite{Lesnick2015} that the interleaving distance between two persistence modules is the same as the bottleneck distance between their persistence diagrams. This permits application of the triangle inequality when considering multiple contractions. 
\section{Preserving Pairings}
\label{sec:pair}

We now move to developing a contraction operator for $2$-manifolds such that if $\sigma$ is paired with $\tau$, and if $\sigma$ is nonlocal or a mirror that precedes its partner under $\prec$, then $\xiuv(\sigma)$ is paired with $\xiuv(\tau)$. 

\subsection{Pairings for Manifolds}

In this section, we assume that the underlying space (of the geometric realization) of $K$ is a $2$-manifold. In particular, we assume that $K$ is without boundary, but a slight modification works for manifolds with boundary. Hence, we can assume that each edge is the facet of exactly two triangles. Attali et. al. observed that the persistence pairings of such complexes can be computed in near-linear time in the number of edges\cite{Attali}. This is done by considering two graphs induced by the simplicial complex: the vertex graph $K_v$ and the triangle graph $K_t$. To avoid confusion, we refer to edges and vertices in the vertex and triangle graphs as \textit{arcs} and \textit{nodes}, respectively. The vertex graph is induced in the obvious way, and the triangle graph is its dual. Note that arcs in $K_t$ also correspond to edges in $K$. In both graphs, arc $e$ is weighted with value $h(e')$, where $e'$ is the edge corresponding to $e$. Persistence partners are computed by using Kruskal's algorithm to compute a minimum spanning tree on $K_v$ and a maximum spanning tree on $K_t$. When an arc $e$ is is introduced to a spanning forest on $K_v$, it connects two trees rooted at nodes $v_1,v_2$. Assuming the vertex corresponding to $v_1$ occurs prior to that corresponding to $v_2$ under $\prec$, we define $\minv(e) = v_1$ and $\maxv(e) = v_2$. Following the introduction of $e$, $\minv(e)$ becomes the root of the combined tree and the edge corresponding to $e$ is paired with the vertex corresponding to $\maxv(e)$. For arcs $e$ in $K_t$, $\maxt(e)$ and $\mint(e)$ are defined analogously, except upon the introduction of $e$, $\maxt(e)$ is the root of the new tree, and the edge corresponding to $e$ is paired with the triangle corresponding to $\mint(e)$.

Note that some simplices remain unpaired following this algorithm. We let $P(K)$ denote the set of \textit{pairs} of simplices that results from the aforementioned algorithm. We aim to develop a contraction operator such that 
\begin{equation}
    P(K') = \{(\xiuv(\sigma),\xiuv(\tau)) \; | \; (\sigma, \tau) \in P(K) \land \xiuv(\sigma) \neq \xiuv(\tau)\}.
\end{equation}

\subsection{A Persistence-Pair Preserving Condition}

We now present sufficient conditions for contracting an edge that maintains the persistence pairing. First, we assume that $e$ satisfies the \textit{link condition}, which ensures that the complex remains a $2$-manifold following contraction \cite{linkcondition}. For simplex $\sigma$, we define $\cl(\sigma) = \{\tau \; | \; \tau < \sigma\}$, $\st(\sigma) = \{\tau \; | \; \sigma < \tau\}$, and $\lk(\sigma) = \cl(\st(\sigma)) \setminus \st(\cl(\sigma))$. All of these operations extend to sets of simplices in the natural way. 
\begin{definition}[Link Condition\cite{linkcondition}]
An edge $e = \{u,v\}$ satisfies the link condition if $\lk(e) = \lk(u) \cap \lk(v)$. We say $e$ is \textit{contractible} if it satisfies the link condition.
\end{definition}
Requiring the link condition implies that there are two sets of mirrored edges relative to $\{u,v\}$. We label the two incident triangles to $\{u,v\}$ as $t_1$ and $t_2$, and their respective constituent mirrored edges as $e_{1},e_{1}'$ and $e_{2},e_{2}'$. We will assume without loss of generality that $e_{1} \prec e_{1}'$ and $e_{2} \prec e_{2}'$. 
\begin{definition}
An edge $e = \{u,v\}$ is admissible if $e$ satisfies the Link Condition, and $e$ is paired with $v$, $t_1$ is paired with $e_{1}'$, and $t_2$ is paired with $e_{2}'$.
\end{definition}
This definition, together with the pairing algorithm for $2$-manifolds, gives the following results.
\begin{theorem}
If $\{u,v\}$ is admissible, and $e \in K$, $e \neq \{u,v\}, e_{1}', e_{2}'$ is paired with vertex $r$, then $\xiuv(e)$ is paired with $\xiuv(r)$.
\end{theorem}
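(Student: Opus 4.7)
The plan is to run Kruskal's algorithm on $K_v$ and $K_v'$ in lockstep and argue that admissibility forces their evolving spanning forests to agree up to identifying $u$ with $v$. A first routine observation is that the root of any tree produced by Kruskal is always the $\prec$-minimum vertex of that tree, since every merge replaces the new root by the $\prec$-smaller of the two old roots. Consequently the pair $(e,r)$ means that, when $e$ was processed, $r$ was the $\prec$-minimum of one of the two trees joined by $e$; in particular, $r \neq v$, because $v$ is already consumed as the partner of $\{u,v\}$ under admissibility and each vertex pairs with at most one edge, so $\xiuv(r) = r$.

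Next I would relate the arc sequences. The map $\xiuv$ sends arcs of $K_v$ to arcs of $K_v'$, deleting $\{u,v\}$ and identifying each mirrored pair $(e_i,e_i')$ with its earlier representative $e_i$ (whose weight is inherited by the image), so the Kruskal order on $K_v'$ is exactly the Kruskal order on $K_v$ with the three arcs $\{u,v\}, e_1', e_2'$ skipped. I would then prove, by induction on the step count, the following invariant: after step $t$, the components of the forest $F_t$ on $K_v$ correspond bijectively under $\xiuv$ to the components of the forest $F'_{t'}$ on $K_v'$, except that the one or two components of $F_t$ meeting $\{u,v\}$ are amalgamated into the single component of $F'_{t'}$ containing $\xiuv(u)=\xiuv(v)$; moreover, the root of each component in $F'_{t'}$ is the $\xiuv$-image of the root of the corresponding component in $F_t$, where the representative for the amalgamated component is the root of the part containing $u$ rather than the root of the part containing $v$.

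Admissibility enters in two essential ways during the inductive step. First, the pairing of $\{u,v\}$ with $v$ says that $v$ is the $\prec$-minimum of its component $C_v$ just before $\{u,v\}$ is processed, which is exactly what lets us drop $v$ after amalgamation while keeping the root of $C_u$ as the root of the image in $K_v'$. Second, no non-special arc $a$ can join $C_u$ and $C_v$ before $\{u,v\}$ itself is processed, since otherwise $\{u,v\}$ would later be rejected, contradicting admissibility; this rules out precisely the sub-case in which $\xiuv(a)$ would fail to have two distinct endpoint components in $K_v'$.

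With the invariant in hand the conclusion is immediate: if $e$ joins trees with roots $r_1 \prec r_2 = r$ in $K_v$, then $\xiuv(e)$ joins two distinct components in $K_v'$ (by the admissibility argument above) whose roots are $r$ and some vertex $\prec r$, so Kruskal on $K_v'$ produces the pair $(\xiuv(e), r) = (\xiuv(e), \xiuv(r))$, as required. The main obstacle I expect is the bookkeeping in proving the root-preserving invariant, which must separately treat whether $\{u,v\}$ has already been processed and whether the two trees joined by the current arc intersect $\{u,v\}$, with each configuration demanding a slightly different use of the admissibility hypothesis (including when the current arc is itself one of the mirrored edges $e_1, e_2$).
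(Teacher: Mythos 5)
Your proposal is correct and is essentially the paper's argument: the paper packages the same lockstep comparison of Kruskal's algorithm on $K_v$ and $K'_v$ as a minimal-counterexample contradiction (take the $\prec$-first edge whose pairing changes, observe the forests still correspond, and conclude $\maxv(e)=v$, contradicting admissibility), whereas you make the underlying root-preserving invariant explicit via forward induction. Both hinge on the same two uses of admissibility---that $v$ remains the root of its component until $\{u,v\}$ is processed, and that no earlier arc merges the components of $u$ and $v$---so this is the same proof, just written out more carefully.
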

\begin{proof}
Aiming for a contradiction, we assume that $e$ is paired with $r$ but $\xiuv(e)$ is not paired with $\xiuv(r)$. In particular, we assume that $e$ is the first such edge under $\prec$ which satisfies this condition. Then upon introducing the arc corresponding to $\xiuv(e)$, there necessarily exist paths in $K'_v$ from the end nodes of the arc corresponding to $\xiuv(e)$ to $\xiuv(\maxv(e))$ and $\xiuv(\minv(e))$. In particular, both $\xiuv(\maxv(e))$ and $\xiuv(\minv(e))$ must be unpaired, as $e$ is the first edge such that $e$ is paired with $r$ but $\xiuv(e)$ is not paired with $\xiuv(r)$. Hence, this means that $\xiuv(\maxv(e)) \prec \xiuv(\minv(e))$. But this means that $\maxv(e) = v$, which contradicts $\{u,v\}$ being admissible.
\end{proof}
\begin{theorem}
If $\{u,v\}$ is admissible, and $e \in K$, $e \neq \{u,v\}, e_{1}', e_{2}'$ is paired with triangle $r$, then $\xiuv(e)$ is paired with $\xiuv(r)$.
\end{theorem}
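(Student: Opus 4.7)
The plan is to dualize the preceding theorem's proof by working in the triangle graph $K_t$ (and its image $K_t'$ under $\xiuv$) with the maximum spanning tree in place of the minimum spanning tree on $K_v$. Since the max-spanning-tree version of Kruskal's algorithm processes arcs in decreasing $\prec$-order, the ``first edge under $\prec$'' of the vertex-case proof becomes here the first edge encountered in this decreasing order, i.e., the $\prec$-largest counterexample.

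Assume toward contradiction that there is some edge $e \neq \{u,v\}, e_1', e_2'$ paired with a triangle $r$ in $K$ for which $\xiuv(e)$ is not paired with $\xiuv(r)$ in $K'$, and take $e$ to be the first such counterexample. In the generic case, where neither $\maxt(e)$ nor $\mint(e)$ is a vanishing triangle, the choice of $e$ ensures that the two trees joined by $\xiuv(e)$ in $K_t'$ are still rooted at $\xiuv(\maxt(e))$ and $\xiuv(\mint(e))$, both still unpaired. Since non-local triangles preserve their $\prec$-positions under $\xiuv$, the order of these two roots is unchanged, so $\xiuv(e)$ pairs with $\xiuv(\mint(e))=\xiuv(r)$, contradicting the assumption that $e$ is a counterexample.

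It remains to rule out the case where one of $\maxt(e), \mint(e)$ is a vanishing triangle. Admissibility immediately eliminates $\mint(e) \in \{t_1, t_2\}$: since $t_1$ and $t_2$ are already paired with $e_1'$ and $e_2'$ respectively, and no triangle has two partners, any edge $e \neq e_1', e_2'$ with $\mint(e) = t_i$ would force a second pairing of $t_i$. Hence $\maxt(e) \in \{t_1, t_2\}$; say $\maxt(e) = t_1$, so $t_1$ is the root of one of the two trees joined by $e$ in $K_t$, but $t_1$ need not be an endpoint of $e$'s arc.

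The main obstacle is closing this last case. Since $\maxt(e) = t_1$ requires $t_1$ to still be unpaired when $e$ is processed, admissibility forces $e_1$ and $e_1'$ to both lie after $e$ in the processing order; consequently the tree containing $t_1$ at $e$'s processing time can only have grown by the arc $\{u,v\}$ merging $t_1$ with $t_2$'s tree. In $K_t'$, the arcs $\{u,v\}, e_1', e_2'$ are absent or absorbed into mirrored collapses, and the vanishing triangles $t_1, t_2$ are themselves removed, so the corresponding tree in $K_t'$ has a different root $R'$. One then tracks how $R'$ is determined from the admissibility constraints on the relative processing order of $\{u,v\}, e_1', e_2'$, and shows that $R' \succ \xiuv(\mint(e))$ so that $\xiuv(e)$ pairs with $\xiuv(\mint(e))$ after all, yielding the required contradiction. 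This timing-plus-root bookkeeping across the mirrored-pair collapse and vanishing-triangle removal is the crux of the argument.
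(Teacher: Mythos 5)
Your overall strategy matches the paper's: take the $\prec$-greatest (i.e., first-processed) counterexample in the maximum-spanning-tree computation on the triangle graph, and argue that the two roots its arc joins, together with their relative order under $\prec$, survive contraction. Your observation that admissibility plus uniqueness of partners rules out $\mint(e) \in \{t_1,t_2\}$ is correct, and is in fact more explicit than the paper, whose proof simply asserts that neither $\mint(e)$ nor $\maxt(e)$ is vanishing and concludes from there.

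The genuine gap is in your last paragraph. You correctly isolate the case $\maxt(e) \in \{t_1,t_2\}$ and correctly derive the timing constraints (both mirrored arcs of $t_1$ are processed after $e$, so the tree rooted at $t_1$ can reach an endpoint of $e$ only by passing through the arc $\{u,v\}$ into $t_2$ and out of $t_2$'s other arcs), but the statement you actually need --- that the new root $R'$ of the corresponding tree in $K_t'$ still satisfies $R' \succ \xiuv(\mint(e))$ --- is left as something ``one then shows,'' with no argument supplied. That inequality is the entire content of this case: once $t_1$ and $t_2$ are deleted, $R'$ is just the $\prec$-maximum of the surviving nodes of that tree, and nothing you have established prevents that maximum from being smaller than $\mint(e)$. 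There is a further difficulty you inherit from the generic case: the claim that contraction only shortens paths is sound in the vertex graph, where $u$ and $v$ are identified, but in the triangle graph the nodes $t_1,t_2$ are removed and only the two mirrored arc pairs acquire shortcut replacements; any path that enters $t_1$ through one mirrored arc and exits $t_2$ through another (via the arc $\{u,v\}$) has no replacement in $K_t'$, so even the connectivity of the relevant tree must be re-justified in this case. As written, your proposal is a correct reduction to one hard case plus an unproven claim, not a complete proof; to finish it you must either show that the configuration $\maxt(e)\in\{t_1,t_2\}$ cannot arise for an admissible edge, or actually carry out the root-and-timing bookkeeping you allude to.
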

\begin{proof}
Aiming for a contradiction, we assume that there exist an edge $e$ which is paired with a triangle $r$ but $\xiuv(e)$ is not paired with $\xiuv(r)$. In particular, we assume that $e$ is the greatest edge under $\prec$ that meets this condition. Therefore, for all triangle edges $e'$ where $e \prec e'$, $\xiuv(e')$ is paired with $\xiuv(\mint(e'))$. Hence, there necessarily exists a path in $K'_t$ from the edge corresponding to $e$ to the nodes corresponding to $\xiuv(\mint(e))$ and $\xiuv(\maxt(e))$, as contracting can only ``shorten'' the path. Because neither $\mint(e)$ nor $\maxt(e)$ are vanishing, this implies that $\xiuv(e)$ is paired with $\xiuv(\mint(e))$. But this means that $\maxt(e) \prec \mint(e)$, a contradiction. 
\end{proof}
\begin{corollary}
If $\{u,v\}$ is admissible, then $K'_p = \{(\xiuv(\sigma),\xiuv(\tau)) | (\sigma, \tau) \in K_p \land \xiuv(\sigma) \neq \xiuv(\tau)\}$.
\end{corollary}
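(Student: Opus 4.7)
The plan is to prove the two inclusions of the set equality separately. For the forward inclusion $\supseteq$, the two preceding theorems do almost all of the work; what remains is to verify that the three ``local'' pairs forced by admissibility actually collapse under $\xiuv$ and are therefore legitimately excluded from the right-hand side. By admissibility, these pairs are $(v,\{u,v\})$, $(e_1',t_1)$, and $(e_2',t_2)$. A direct application of the rules defining $\xiuv$ on mirrored and vanishing simplices gives $\xiuv(v)=\xiuv(\{u,v\})=u$, $\xiuv(e_1')=\xiuv(t_1)=e_1$, and $\xiuv(e_2')=\xiuv(t_2)=e_2$, so in each case the two entries of the pair coincide and drop out of the right-hand set. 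Every remaining pair $(\sigma,\tau)\in P(K)$ has its edge component outside $\{\{u,v\},e_1',e_2'\}$, so one of the two preceding theorems directly produces $(\xiuv(\sigma),\xiuv(\tau))\in P(K')$.

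For the reverse inclusion $\subseteq$, I would close the argument by a cardinality count rather than build a new combinatorial correspondence. The link condition guarantees that $K$ and $K'$ have the same underlying space, and hence the same Betti numbers. For a closed $2$-manifold, the spanning-tree algorithm of Attali \textit{et al.} leaves unpaired exactly the generators of $H_0, H_1, H_2$, so $|K|=2|P(K)|+\beta_0+\beta_1+\beta_2$ and similarly for $K'$. A direct simplex count shows that the contraction removes precisely the six simplices $v,\{u,v\},t_1,t_2,e_1',e_2'$, giving $|P(K')|=|P(K)|-3$. On the other side of the equation, the right-hand set has size $|P(K)|-3$: the three local pairs are excluded by the condition $\xiuv(\sigma)\neq\xiuv(\tau)$, and the remaining pairs map to distinct pairs because $\xiuv$ is injective on the relevant simplices.

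The most delicate point is precisely this injectivity claim, because $\xiuv$ does identify the mirrored simplices within each pair $(e_1,e_1')$ and $(e_2,e_2')$. This is where admissibility earns its keep: it places $e_1'$ and $e_2'$ into local pairs that are already excluded from the right-hand side, so only $e_1$ and $e_2$ can contribute non-local pairs, and no collision arises. Once this case is handled, combining the matching cardinalities with the $\supseteq$ inclusion proved above yields set equality, completing the proof.
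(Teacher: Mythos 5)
Your proposal is correct, and it is actually more complete than what the paper supplies: the paper states the corollary with no proof, treating it as an immediate consequence of the two preceding theorems. Strictly, those theorems only give the containment $P(K') \supseteq \{(\xiuv(\sigma),\xiuv(\tau))\}$ for the non-local pairs, which is exactly your forward inclusion; the paper leaves the reverse containment implicit. Your cardinality argument --- same Betti numbers via the link condition, $|K'|=|K|-6$, unpaired simplices counted by $\beta_0+\beta_1+\beta_2$ in the Attali et al.\ scheme, hence $|P(K')|=|P(K)|-3$, matched against the size of the right-hand set once injectivity on non-local pairs is checked --- is a legitimate and self-contained way to close that gap, and your observation that admissibility is precisely what prevents collisions (by forcing $e_1'$ and $e_2'$ into excluded local pairs) is the right key point. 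Two cosmetic caveats: the paper's map sends both mirrors of a pair to the one containing $u$, which need not be the $\prec$-smaller one, so writing $\xiuv(e_1')=\xiuv(t_1)=e_1$ presumes $u<e_1$; the argument only needs $\xiuv(e_1')=\xiuv(t_1)$, which holds regardless. Likewise, the list of six ``removed'' simplices should be read as one representative from each identified mirror pair plus the three vanishing simplices; the count of six is what matters, and it does rely on the link condition to exclude any further identifications, which your hypothesis of admissibility provides.
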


\subsection{Expanding the Conditions}

These contraction conditions are expandable, particularly for the vertex/edge pairing. For example, it is somewhat easy to extend our conditions to permit $\{u,v\}$ to be paired with $u$. Expanding the triangle conditions is significantly more difficult. In addition, the notion of preserving the pairing generalizes to the case where neither mirror is paired with their shared vanishing simplex. If $\sigma$ is a vanishing simplex with mirrored facets $\tau_1,\tau_2$, where $\tau_1$ is paired with $\tau_1'$, $\tau_2$ is paired with $\tau_2'$, and $\sigma$ paired with $\sigma'$, then one may be interested in an operator such that $\xiuv(\tau_1)$ is paired with $\xiuv(\tau_1')$ and $\xiuv(\tau_2')$ is paired with $\xiuv(\sigma')$. It will be interesting to develop necessary and sufficient conditions for both of these problems. 
\section{Stable Contraction}
\label{sec:stable}

Let $\dgm_p(M_A)$ and $\dgm_p(M'_A)$ denote the $p$-dimensional persistence diagrams corresponding to the persistence modules $M_A$ and $M'_A$, where $M'_A$ is obtained by contracting a single edge $\{u,v\}$. If $d_B(\dgm_p(M_A),\dgm_p(M'_A)) \leq \epsilon$, then the contraction map $\xiuv$ is said to be $(p,\epsilon)$-stable. In this section we develop such a contraction operator. We will always assume that edge $\{u,v\}$ meets the link condition. In addition, we let $h$ and $h'$ denote height functions on $K$ and $K'$. 

Due to Theorem \ref{thm:bint}, it is sufficient to develop a contraction operator which bounds the interleaving distance between $M_A$ and $M_A'$. Note that any persistence module defined over some index set $A$ can be extended to a persistence module over $\mathbb{R}$ in a canonical way. Hence, we now refer to $M_\mathbb{R}$ and $M'_\mathbb{R}$ and will establish maps such that they are strongly interleaved. We let $\xi_{j_i}$ denote the restriction of $\xiuv$ to the subcomplex $K_{j_i}$. Now, we define the maps $\xi_{j_i}^* \; : H_p(K_{j_i}) \to H_p(K'_{j_i})$ in the following diagram.

\begin{center}
\begin{tikzcd}
\label{diag:first}
\cdots \arrow[r,"f_{j_0,j_1}"] & H_p(K_{j_1}) \arrow[r,"f_{j_1,j_2}"] \arrow[d,"\xi_{j_1}^*"]
 & H_p(K_{j_2}) \arrow[r,"f_{j_2,j_3}"] \arrow[d,"\xi_{j_2}^*"]
& H_p(K_{j_3}) \arrow[r,"f_{j_3,j_4}"] \arrow[d,"\xi_{j_3}^*"]
& \cdots \\
\cdots \arrow[r,"f'_{j_0,j_1}"] & H_k(K'_{j_1}) \arrow[r,"f'_{j_1,j_2}"]
& H_p(K'_{j_2}) \arrow[r,"f'_{j_2,j_3}"] 
& H_p(K'_{j_3}) \arrow[r,"f'_{j_3,j_4}"]
& \cdots
\end{tikzcd}
\end{center}

We extend $\xi_{j_i}$ to the canonical chain map $\xi_{j_i,\#} \; : \; C_p(K_{j_i}) \to C_p(K'_{j_i})$. This map induces a map between homology groups. Let $\gamma \in H_p(K_{j_i})$ and let $\sum_I \sigma_i$ be a representative cycle of $\gamma$. Then we define $\xi_{j_i}^* \; : \; H_p(K_{j_i}) \to H_p(K'_{j_i})$ where $\xi_{j_i}^*(\gamma)$ is determined by $\xi_{j_i,\#}(\sum_I(\sigma_i))$.

\begin{lemma}
If $\sum_I \sigma_i$, $\sigma_i \in K$ is a $p$-cycle in sublevel set $K_{j_i}$, then $\xi_{j_i,\#}(\sum_I \sigma_i)$ is also a $p$-cycle at sublevel set $K_{j_i}'$.
\end{lemma}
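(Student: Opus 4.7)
The plan is to reduce the statement to the defining property of a chain map: I would verify that $\xi_{j_i,\#}$ commutes with the simplicial boundary operator $\partial$. Once this is granted, the conclusion is immediate. If $z = \sum_I \sigma_i$ is a $p$-cycle then
\[
\partial\bigl(\xi_{j_i,\#}(z)\bigr) \;=\; \xi_{j_i,\#}(\partial z) \;=\; \xi_{j_i,\#}(0) \;=\; 0,
\]
so the image is a cycle. Membership in $C_p(K'_{j_i})$ itself is not an issue: $\xi_{j_i}$ is by construction the restriction of $\xiuv$ to $K_{j_i}$, whose image is exactly $\xiuv(K_{j_i}) = K'_{j_i}$, so $\xi_{j_i,\#}$ lands in $C_p(K'_{j_i})$ by definition.

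The real content therefore lies in checking that the canonical chain map associated to the simplicial map $\xi_{j_i}$ commutes with $\partial$. While this is a classical fact for simplicial maps, it is worth verifying explicitly here because $\xiuv$ is not dimension-preserving on every simplex: any vanishing $p$-simplex (one with $\{u,v\}$ as a face) has image of dimension $p-1$, so under the usual convention we set $\xi_{j_i,\#}(\sigma) = 0$ in $C_p$. I would verify commutativity simplex-by-simplex and extend by linearity, organizing the argument around the three simplex types identified in the preliminaries. For nonlocal $\sigma$, $\xi_{j_i}$ is the identity on $\sigma$ and all its facets, so both sides equal $\partial \sigma$. For mirrored $\sigma$ the image $\xi_\#(\sigma)$ has the same dimension, and each facet of $\sigma$ is either nonlocal (fixed by $\xi$) or itself mirrored; since the paper defines $\xiuv$ to send each mirrored pair to the common image simplex, the facets of $\xi_\#(\sigma)$ agree termwise with $\xi_\#(\partial \sigma)$.

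The main obstacle, and the case that needs the most care, is the vanishing one. For a vanishing $p$-simplex $\sigma = \{u, v, w_1, \ldots, w_{p-1}\}$ we have $\xi_\#(\sigma) = 0$, so the goal reduces to showing $\xi_\#(\partial \sigma) = 0$. The facets of $\sigma$ split into those obtained by omitting some $w_j$ (which still contain $\{u,v\}$, are themselves vanishing of dimension $p-1$, and hence map to $0$ in $C_{p-1}$) and the two facets $\tau_u = \{u, w_1, \ldots, w_{p-1}\}$, $\tau_v = \{v, w_1, \ldots, w_{p-1}\}$ obtained by omitting $v$ and $u$ respectively. These form a mirrored pair, so by definition $\xiuv(\tau_u) = \xiuv(\tau_v)$; working over $\mathbb{Z}_2$ the two identical contributions cancel, giving $\xi_\#(\partial \sigma) = 0$. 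This completes the commutativity check, and with it the lemma.
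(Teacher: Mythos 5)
Your proof is correct, but it is organized differently from the paper's. The paper argues directly on the image chain: it fixes a (nonvanishing) facet $\tau$ and counts, by parity, how many summands of $\xi_{j_i,\#}(\sum_I\sigma_i)$ contain $\xi_{j_i}(\tau)$, splitting into the nonlocal and mirrored cases. You instead prove the stronger, standard statement that $\xi_{j_i,\#}$ commutes with $\partial$ and deduce the lemma in one line; the only nontrivial verification is the vanishing case, where you correctly observe that the vanishing facets die in $C_{p-1}$ and the two mirrored facets map to a common simplex and cancel over $\mathbb{Z}_2$. The underlying combinatorics (mirrored pairs cancel, vanishing simplices drop dimension) is the same in both arguments, but your route buys something concrete: the chain-map identity immediately yields the paper's next result (that boundaries map to boundaries, via $\xi_\#(\partial\Gamma)=\partial\xi_\#(\Gamma)$), which the paper proves by a separate, longer parity argument. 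One small caveat: your nonlocal case asserts that $\xi_{j_i}$ is the identity on $\sigma$ and all its facets. This follows the paper's own phrasing, but an adjacent simplex containing $v$ (e.g.\ $\{p,q,v\}$ in Figure~\ref{fig:simpDiag}) is nonlocal yet has its vertex $v$ replaced by $u$; the commutativity check still goes through there because the vertex map is injective on such a simplex, so each facet maps to the corresponding facet of the image, but you should state that case explicitly rather than appeal to the identity.
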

\begin{proof}
We aim to show that $\partial \xi_{j_i,\#}(\sum_I \sigma_i) = 0$. For every facet $\tau$ of some $\sigma_i$, there are necessarily an even number of $\sigma_i$ containing $\tau$ as a facet. It is sufficient to show that for nonvanishing $\tau$, $\xi_{j_i}(\tau)$ is a facet of an even number of summands of $\sum_I \xi_{j_i}(\sigma_i)$. If $\tau$ is nonlocal, then the result is immediate, as if $\sigma_i$ is incident to $\tau$, then $\xiuv(\sigma_i)$ is incident to $\xiuv(\tau) = \tau$. If $\tau$ is mirrored, then $\xiuv(\tau)$ is incident to the adjacent $\sigma_i$ which contain either $\tau$ or $m(\tau)$, the sum of which is clearly even. 
\end{proof}

\begin{theorem}
If $\sum_I \sigma_i$, $\sigma_i \in K$ is a boundary at sublevel set at $K_{j_i}$, then $\xi_{j_i,\#}(\sum_I \sigma_i)$ is also a boundary at sublevel set $K_{j_i}'$.
\label{cor:bdry}
\end{theorem}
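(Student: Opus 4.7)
The plan is to exhibit $\xi_{j_i,\#}$ as a chain map, i.e.\ to verify $\partial \circ \xi_{j_i,\#} = \xi_{j_i,\#} \circ \partial$; the theorem then follows immediately. Given a $(p+1)$-chain $c \in C_{p+1}(K_{j_i})$ with $\partial c = \sum_I \sigma_i$, the chain-map identity would yield
\[
\xi_{j_i,\#}\!\left(\sum_I \sigma_i\right) \;=\; \xi_{j_i,\#}(\partial c) \;=\; \partial\bigl(\xi_{j_i,\#}(c)\bigr),
\]
exhibiting $\xi_{j_i,\#}(\sum_I \sigma_i)$ as a boundary. That $\xi_{j_i,\#}(c)$ lies in $C_{p+1}(K'_{j_i})$ is immediate: for every $(p+1)$-simplex $\tau$ appearing in the support of $c$, the definition of $h'$ gives $h'(\xiuv(\tau)) \le h(\tau) \le j_i$, so $\xiuv(\tau) \in K'_{j_i}$.

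The heart of the argument is therefore the chain-map identity, which I would verify simplex by simplex on a $(p+1)$-simplex $\tau$ using the classification of simplices. When $\tau$ is nonvanishing, $\xiuv$ is injective on the vertex set of $\tau$, so $\xi_{j_i,\#}(\tau)$ is a single $(p+1)$-simplex in $K'$ whose facets are in bijection with those of $\tau$ via $\xiuv$; the identity then reduces to the standard fact that a simplicial map induces a chain map. When $\tau$ is vanishing we have $\xi_{j_i,\#}(\tau) = 0$ and hence $\partial \xi_{j_i,\#}(\tau) = 0$, so it remains to show $\xi_{j_i,\#}(\partial \tau) = 0$. Among the facets of $\tau$, those that still contain $\{u,v\}$ are themselves vanishing and map to $0$; the two remaining facets are obtained by deleting $u$ and deleting $v$ from $\tau$, forming a mirrored pair, so they share a common image under $\xiuv$ and their two identical summands cancel in $\mathbb{Z}_2$-coefficients.

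I expect the principal subtlety to lie in the vanishing case: one must recognize the two non-vanishing facets of a vanishing $(p+1)$-simplex as mirrors of each other, so that their coincident images cancel over $\mathbb{Z}_2$. This is exactly the parity argument that underpinned the preceding cycle-preservation lemma; the same mechanism that sent cycles to cycles will send boundaries to boundaries, since once $\partial$ commutes with $\xi_{j_i,\#}$ any filling chain $c$ in $K_{j_i}$ is transported to a filling chain $\xi_{j_i,\#}(c)$ in $K'_{j_i}$.
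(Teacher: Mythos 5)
Your proposal is correct, and it reaches the same target identity as the paper --- namely $\xi_{j_i,\#}(\partial \Gamma) = \partial\,\xi_{j_i,\#}(\Gamma)$ for a filling chain $\Gamma$ --- but it gets there by a cleaner and more local route. The paper fixes the particular chain $\Gamma$ and runs a global parity count over its facets, with casework on whether a facet is nonlocal, is a mirror whose partner has even incidence, or is one of a pair of mirrors that both have odd incidence (and a similar bookkeeping for even-parity facets); this argument also quietly omits the case of vanishing facets of elements of $\Gamma$, which contribute $0$ to the image chain. You instead verify the chain-map identity on a single $(p+1)$-simplex and extend by linearity: for a nonvanishing simplex $\xiuv$ is injective on its vertices, so facets biject with facets; for a vanishing simplex the image is degenerate (hence $0$), the facets still containing $\{u,v\}$ also die, and the two surviving facets are mirrors whose identical images cancel mod $2$. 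This local verification subsumes all of the paper's parity cases at once and is the standard argument that a simplicial map induces a chain map; it also makes the cycle-preservation lemma preceding this theorem an immediate corollary of the same identity rather than a separate parity argument. Your additional check that $\xi_{j_i,\#}(c)$ lands in $C_{p+1}(K'_{j_i})$ via $h'(\xiuv(\tau)) \le h(\tau)$ is a point the paper leaves implicit. No gaps.
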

\begin{proof}
Assume that $\sum_I \sigma_i = \partial(\Gamma)$ for some $\dim(\sigma_i)+1$ chain $\Gamma$. We show that $\xi_{j_i,\#}(\sum_I \sigma_i) = \partial \xi_{j_i,\#}(\Gamma)$.
First, consider $\sigma$ where $\sigma$ is the facet of an odd number of elements in $\Gamma$. If $\sigma$ is nonlocal, then $\xi_{j_i}(\sigma)$ is the facet of the same number of summands of $\Gamma$ as $\sigma$, so $\xi_{j_i}(\sigma)$ remains on the boundary. Similarly, if $\sigma$ is mirrored with odd parity and $m(\sigma)$ is incident to an even number of elements of $\Gamma$, then it is easy to see $\xi_{j_i}(\sigma)$ is the facet of an odd number of elements of $\xi_{j_i,\#}(\Gamma)$. If both $\sigma$ and $m(\sigma)$ are the facet of an odd number of elements of $\Gamma$, then $\xi_{j_i}(\sigma)$ is the facet of an even number. However, $\xi_{j_i}(\sigma) = \xi_{j_i}(m(\sigma))$, so they will cancel. This implies that every summand of $\xi_{j_i,\#}(\sum_I \sigma_i)$ that does not cancel maintains odd parity.

It remains to be seen that facets of an even number of elements of $\Gamma$ maintain even parity under $\xi_{j_i,\#}$. Let $\sigma$ be such a facet. If $\sigma$ is nonlocal, then this is trivially true. Similarly, if $\sigma$ is a mirror and $m(\sigma)$ is the facet of an even number of elements of $\Gamma$, then $\xi_{j_i}(\sigma)$ must be the facet of an even number of summands of $\xi_{j_i,\#}(\Gamma)$. 

Hence, $\xi_{j_i,\#}(\sum_I \sigma_i) = \partial \xi_{j_i,\#}(\Gamma)$, and the proof follows.

\end{proof}

The next theorem follows immediately from the previous two results.

\begin{theorem}
The maps $\xi_{j}^*$ are well defined.
\end{theorem}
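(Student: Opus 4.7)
The plan is to argue that the two preceding results together are exactly what is needed to justify the definition. Recall that $\xi_{j_i}^*(\gamma)$ was defined by taking a representative cycle $c=\sum_I \sigma_i$ of $\gamma \in H_p(K_{j_i})$ and setting $\xi_{j_i}^*(\gamma) = [\xi_{j_i,\#}(c)]$. For this to make sense as a map on homology we must check two things: that $\xi_{j_i,\#}(c)$ is itself a cycle in $K_{j_i}'$, so that it represents some class, and that this class is independent of the chosen representative $c$.

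First I would invoke the preceding lemma, which gives exactly the first requirement: if $c$ is a $p$-cycle in the sublevel set $K_{j_i}$, then $\xi_{j_i,\#}(c)$ is a $p$-cycle in $K_{j_i}'$. In particular the class $[\xi_{j_i,\#}(c)] \in H_p(K_{j_i}')$ is defined.

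Next I would handle the independence from the representative. Let $c$ and $c'$ be two cycles representing $\gamma$, so that $c-c' = \partial \Gamma$ for some $(p{+}1)$-chain $\Gamma$ in $K_{j_i}$. Because $\xi_{j_i,\#}$ is a $\mathbb{Z}_2$-linear chain map on chains, we have $\xi_{j_i,\#}(c) - \xi_{j_i,\#}(c') = \xi_{j_i,\#}(c-c') = \xi_{j_i,\#}(\partial \Gamma)$. By Theorem \ref{cor:bdry}, $\xi_{j_i,\#}(\partial \Gamma)$ is a boundary in $K_{j_i}'$, so $\xi_{j_i,\#}(c)$ and $\xi_{j_i,\#}(c')$ are homologous and hence represent the same class. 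Thus $\xi_{j_i}^*(\gamma)$ depends only on $\gamma$, so the map is well defined.

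There is not really a main obstacle here: all of the real work has already been done in the preceding lemma (cycles go to cycles) and in Theorem \ref{cor:bdry} (boundaries go to boundaries). The only minor point to be careful about is that $\xi_{j_i,\#}$ as described on individual simplices extends linearly to a chain map, so that the argument ``apply $\xi_{j_i,\#}$ to $c-c' = \partial \Gamma$'' is legitimate; this is immediate because we work over $\mathbb{Z}_2$ and $\xi_{j_i,\#}$ is defined simplex-by-simplex.
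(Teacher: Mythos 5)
Your proposal is correct and matches the paper's intent exactly: the paper simply states that this theorem ``follows immediately from the previous two results,'' and your write-up spells out precisely why --- the lemma guarantees that cycles map to cycles so the image class exists, and Theorem \ref{cor:bdry} guarantees that boundaries map to boundaries so the class is independent of the chosen representative. No gaps; you have just made explicit what the paper leaves implicit.
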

For all $\epsilon > 0$, $\xi_{j}^*$ extends to a map $\xi_{j}^{j+\epsilon,*}$ by composing $f'_{j,j + \epsilon}$ with $\xi_{j}^*$. 

It is now necessary to define maps from the contracted persistence module to the original module. To do so, we will require $\{u,v\}$ to be $(p,\epsilon)$-admissible for some fixed $\epsilon$.
\begin{definition}
Edge $\{u,v\}$ is $(p,\epsilon)$-admissible if
\begin{enumerate}
    \item For each pair of $p$-mirrors $\sigma_1 \prec \sigma_2$ with shared vanishing cofacet $\tau$, $|h(\sigma_2) - h(\tau)| \leq \epsilon$ and,  
    \item if $p > 0$ then for each pair of $p-1$-mirrors $\sigma_1 \prec \sigma_2$ with shared vanishing cofacet $\tau$, $|h(\sigma_2) - h(\tau)| \leq \epsilon$. 
\end{enumerate}
\end{definition}
Intuitively, the first requirement  ensures commutativity when contracting $\{u,v\}$ destroys homological classes in some subcomplex. The second requirement does the same when contracting $\{u,v\}$ creates cycles in a subcomplex. Clearly, contraction cannot create a new $0$-dimensional class, so the requirement does not apply in this case.

For $(p,\epsilon)$-admissible edge $\{u,v\}$, we now define chain maps $\psi_{a}^b \; : \; K'_a \to C_p(K_b)$ $\forall a \in \mathbb{R}$, where $b = a + \epsilon$ and $C_p(K_b)$ is the group of $p$-chains over $K_b$. Of particular importance is determining the image under $\psi_a^b$ of those simplices $\sigma$ which are the image of mirrored simplices. If $\sigma = \xiuv(\tau) = \xiuv(m(\tau))$, $\tau \prec m(\tau)$, then we let $\psi_a^b(\sigma) = \tau + \sum_I \sigma_i$, where $\sigma_i$ are a subset of those vanishing $\dim(\sigma)$-simplices relative to $\{u,v\}$ which are incident to $\tau$. In particular, we only include those $\sigma_i$ where, for the shared mirrored facet $\eta$, $m(\eta) \prec \eta$. Similarly, if $\sigma$ is the image of an adjacent simplex which contains $\eta$ as a facet, $\eta$ a mirror where $m(\eta) \prec \eta$, then $\psi_a^b(\sigma) = \xiuv^{-1}(\sigma) + \tau$, where $\tau$ is the vanishing cofacet of $\eta$. If $\sigma$ is nonlocal, then $\psi_a^b(\sigma) = \sigma$. The map $\psi_{a}^b$ now extends linearly to a chain map $\psi_{a,\#}^b \; : \; C_p(K'_a) \to C_p(K_b)$.

\begin{theorem}
Let $\sum_I \sigma_i$ be a cycle in $K'_a$. Then $\psi_{a,\#}^b(\sum_I \sigma_i)$ is a cycle in $K_{a+\epsilon}$.
\label{thm:cycle}
\end{theorem}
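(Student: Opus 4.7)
The approach is to strengthen the claim by showing that $\psi_{a,\#}^b$ extends to a chain map $C_\ast(K_a')\to C_\ast(K_{a+\epsilon})$, i.e.\ that $\partial\circ\psi_{a,\#}^b=\psi_{a,\#}^b\circ\partial$ once $\psi$ is defined on $(p-1)$-chains by the same recipe. The cycle statement is then an immediate consequence: for $c=\sum_I\sigma_i$ with $\partial c=0$, we get $\partial\psi_{a,\#}^b(c)=\psi_{a,\#}^b(\partial c)=0$. The second clause of $(p,\epsilon)$-admissibility is precisely what guarantees that the vanishing correction summands introduced at dimension $p$ lie in $K_{a+\epsilon}$, while the first clause plays the analogous role in dimension $p-1$; both are therefore genuine chains in $K_{a+\epsilon}$.

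To prove the chain-map identity, I would proceed generator by generator and split into three cases according to the type of $\sigma\in K_a'$. When $\sigma$ is nonlocal, $\psi(\sigma)=\sigma$, and the identity reduces to checking that the facets of $\sigma$ in $K$ correspond, under $\psi$ at the lower dimension, to the facets of $\sigma$ in $K'$; this is essentially the parity argument of Theorem~\ref{cor:bdry}. When $\sigma=\xiuv(\tau)=\xiuv(m(\tau))$ is the image of a mirror pair with $\tau\prec m(\tau)$, we have $\psi(\sigma)=\tau+\sum_j\nu_j$. In $K$, $\partial\tau$ contributes one nonlocal facet together with several mirrored or adjacent facets $\eta_j$; meanwhile, for each included $\nu_j$, $\partial\nu_j$ contributes the mirror pair $\{\eta_j,m(\eta_j)\}$ (plus vanishing facets). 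Modulo $2$, the $\eta_j$ coming from $\partial\tau$ cancels the $\eta_j$ coming from $\partial\nu_j$, leaving $m(\eta_j)$, which is exactly the value of the level-$(p-1)$ extension of $\psi$ on the corresponding facet of $\sigma$ in $K'$. The adjacent-image case $\psi(\sigma)=\xiuv^{-1}(\sigma)+\tau$ is handled by an analogous computation.

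The main obstacle is the parity bookkeeping that links these cases together. Concretely, one must verify that the selection rule ``include $\nu_j$ exactly when the shared mirror facet $\eta$ satisfies $m(\eta)\prec\eta$'' forces every vanishing $p$-simplex to appear in $\psi(\sigma')$ for exactly one of the two neighbouring mirror images $\sigma'\in K_a'$; if it appeared in neither or in both, the mirror contributions to $\partial\psi$ would either over- or under-count and fail to cancel. A secondary wrinkle is keeping track of what $\psi(\partial\sigma)$ means in $K'$: facets of $\sigma$ in $K'$ can themselves be images of $(p-1)$-mirror pairs or of adjacent $(p-1)$-simplices, so the level-$(p-1)$ recipe applied to them must produce exactly the mirror-swapped picture that $\partial\psi(\sigma)$ in $K$ yields. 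Once this ``appears exactly once'' property and its $(p-1)$-dimensional companion are confirmed, the boundary computation in each case matches, the chain-map identity holds on every generator, and the cycle statement follows.
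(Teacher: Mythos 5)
Your route is sound but genuinely different from the paper's. The paper does not prove the chain-map identity $\partial\circ\psi_{a,\#}^b=\psi_{a,\#}^b\circ\partial$; instead it takes the image $\psi_{a,\#}^b(\sum_I\sigma_i)$ of a given cycle and argues directly that every facet of the image chain has even multiplicity, splitting into cases according to whether that facet is nonlocal, mirrored, or vanishing, and handling the hard vanishing case with an Eulerian-circuit argument (partitioning a graph of cofaces into a $u$-side and a $v$-side and counting crossings). Your chain-map formulation is arguably the cleaner package: it localizes all the work to single generators and, once established, yields not only this theorem but also the subsequent boundary theorem for free, whereas the paper reruns ``essentially the same parity argument'' for that result. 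The cost is that the global Eulerian cancellation reappears in your approach as a delicate generator-level cancellation among the vanishing correction terms, and this is exactly where your sketch is thinnest. Two specific cautions. First, the invariant you propose to verify --- that each vanishing $p$-simplex appears in $\psi(\sigma')$ for ``exactly one of the two neighbouring mirror images'' --- is phrased for the manifold setting; in a general $d$-complex a mirrored facet can have arbitrarily many cofacets, and the correct statement is a mod-$2$ cancellation in $\partial\psi(\sigma)-\psi(\partial\sigma)$ for each fixed generator $\sigma$: the vanishing $(p-1)$-faces contributed by $\partial$ of the $p$-dimensional corrections must cancel, pair by pair, against the $(p-1)$-dimensional corrections attached to the facets of $\sigma$, and one must check that the selection rule ``include the cofacet of $\eta$ iff $m(\eta)\prec\eta$'' makes the two selections agree mod $2$ independently of the orderings chosen at dimension $p-2$. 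This does work out (the dependence on the lower-dimensional orderings cancels), but it is the crux and needs to be written out. Second, your attribution of the admissibility clauses is off: it is the \emph{second} clause (on $(p-1)$-mirrors and their vanishing cofacets) that places the $p$-dimensional correction simplices inside $K_{a+\epsilon}$; the first clause (on $p$-mirrors and their $(p+1)$-dimensional cofacets) is not needed here at all --- it is used later to show that $\xi$ and $\psi$ compose to maps chain-homotopic to inclusions, i.e., for the commutativity of Equation~\ref{eqn:homcommute}.
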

\begin{proof}
Note that $\psi_{a,\#}^b(\sum_I \sigma_i)  = \sum_I \psi_a^b(\sigma_i)$. Since $\psi_a^b$ brings simplices to chains, we rewrite this sum as $\sum_{I'} \sigma_{i'}$. Let $F = \{\tau <_1 \sigma_{i'} \; | \; i' \in I'\}$. We aim to show that for each $\tau \in F$, $\tau$ is the facet of an even number of $\sigma_{i'}$. First, assume $\tau$ is the facet of nonlocal simplices. Then because $\psi_a^b$ is the identity on nonlocal simplices, $\tau$ is necessarily the facet of an even number of simplices in $\sum_{I'} \sigma_{i'}$. Now assume that $\tau$ is a mirrored simplex. Note that there is a unique $\dim(\tau)+1$ vanishing simplex containing $\tau$ as a facet. Call this simplex $\sigma$. Note that if the preimage of $\tau$ under $\psi_a^b$ is incident to an even number of  adjacent simplices, then $\tau$ is contained in an even number of mirrors, and is either contained in an even number of adjacent simplices containing $u$, an even number of adjacent simplices containing $v$, or an odd number of adjacent simplices and $\sigma$. In any case, $\tau$ is contained in an even number of $\sigma_{i'}$.  

Finally, we consider the case where $\tau$ is a vanishing simplex. Note that as $\tau$ is vanishing, it is only the facet of vanishing simplices. We consider two graphs, $G$ and $G'$. Let the vertex set of $G'$ correspond to the $\dim(\tau)$-simplices which contain $\xiuv(\tau)$ as a facet and appear as the facet of an element of $\sum \sigma_i$. Two vertices are connected if their corresponding $\dim(\tau)$-simplices are both the face of a single $\dim(\tau)+1$ simplex $\sigma_i$ which is an element of the cycle. Note that this graph is necessarily Eulerian, as every $\dim(\tau)$ simplex occurs as the facet of an even number of $\dim(\tau)+1$ simplices in the cycle. The graph $G$ is defined by the preimages of these simplices. We claim that $G$ is Eulerian. From $G'$, it is immediate that each vertex corresponding to a non-mirror is necessarily incident to an even number of edges. The parity for mirrors follows similarly. Hence, all vertices in $G'$ are incident to an even number of edges. Note that the vertex set of $G$ can be partitioned into $G_v$ and $G_u$, depending on if a vertex corresponds to a simplex containing $u$ or $v$. A walk of the Eulerian circuit in $G$ must then cross between $G_v$ and $G_u$ and even number of times. Each crossing corresponds to a vanishing simplex which contains $\tau$, which means that $\tau$ is a facet of an even number of simplices in $\psi_{a,\#}^b(\sum_I \sigma_i)$.
\end{proof}

The next theorem follows from essentially the same parity argument as Theorem \ref{thm:cycle}.
\begin{theorem}
Let $\sum_J \tau_j$ be a boundary in $K'_a$. Then $\psi_{a,\#}^b(\sum_J \tau_j)$ is a boundary in $K_{a + \epsilon}$. 
\end{theorem}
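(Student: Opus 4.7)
The strategy is to lift the construction of $\psi_{a,\#}^b$ one dimension higher: I would define an analog $\tilde\psi_{a,\#}^b \; : \; C_{p+1}(K'_a) \to C_{p+1}(K_{a+\epsilon})$ and then establish the chain-level identity $\partial \circ \tilde\psi_{a,\#}^b = \psi_{a,\#}^b \circ \partial$. Given any preimage $(p+1)$-chain $\Gamma$ in $K'_a$ with $\partial \Gamma = \sum_J \tau_j$, the chain $\tilde\psi_{a,\#}^b(\Gamma)$ will then witness $\psi_{a,\#}^b(\sum_J \tau_j)$ as a boundary in $K_{a+\epsilon}$, which is the conclusion of the theorem.

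I would define $\tilde\psi_{a,\#}^b$ on a $(p+1)$-simplex $\sigma'$ by the same three-case recipe used for $\psi_{a,\#}^b$, applied one dimension up. A nonlocal $\sigma'$ maps to itself. The image of a $(p+1)$-mirror $\rho$ (with $\rho \prec m(\rho)$) maps to $\rho$ plus the vanishing $(p+1)$-cofacets $\sigma_i$ incident to $\rho$ along a $p$-mirror facet $\eta$ with $m(\eta) \prec \eta$. The image of a $(p+1)$-adjacent simplex whose distinguished $p$-mirror facet $\eta$ satisfies $m(\eta) \prec \eta$ maps to $\xiuvinv(\sigma')$ plus the unique vanishing $(p+1)$-cofacet of $\eta$. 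Every extra summand introduced is a vanishing $(p+1)$-simplex lying directly above a $p$-mirror in $K_a$, so the first clause of $(p,\epsilon)$-admissibility (the only admissibility clause used in this proof, complementing the second clause used in Theorem \ref{thm:cycle}) ensures that its height is at most $a+\epsilon$ and hence that $\tilde\psi_{a,\#}^b$ has the correct codomain.

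The central step is to verify $\partial \tilde\psi_{a,\#}^b(\sigma') = \psi_{a,\#}^b(\partial \sigma')$ on each $(p+1)$-simplex $\sigma'$. I would argue this by the same facet-by-facet parity bookkeeping used in Theorem \ref{thm:cycle}, comparing the multiplicity of each candidate $p$-simplex $\tau$ on the two sides. The nonlocal case is immediate, since both $\psi$ and $\tilde\psi$ restrict to the identity there. The mirrored case reduces to the observation that the ``$m(\eta)\prec\eta$'' selection rule is shared between the definitions of $\psi$ and $\tilde\psi$, so the auxiliary vanishing summands introduced on one side correspond bijectively with those introduced on the other. The vanishing case is handled by the same bipartite Eulerian-circuit argument that closes the proof of Theorem \ref{thm:cycle}: partition the $p$-simplices having the same $\xiuv$-image as $\tau$ according to whether they contain $u$ or $v$, and observe that every edge in the resulting multigraph corresponds to a $(p+1)$-vanishing summand of $\tilde\psi_{a,\#}^b(\sigma')$ containing $\tau$, forcing the parity to be even.

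The main obstacle is the mirrored-facet case, where I must check that the vanishing $(p+1)$-cofacets produced when $\tilde\psi_{a,\#}^b$ lifts a mirrored or adjacent $(p+1)$-simplex exactly account for the vanishing $p$-simplices produced when $\psi_{a,\#}^b$ is applied to the mirrored $p$-facets of $\partial \sigma'$. The coherence is forced by the shared ``smaller partner first'' convention, but rigorously confirming it requires a careful simplex-level enumeration of all the ways a $p$-mirror $\eta$ can appear as a facet of a $(p+1)$-mirror, a $(p+1)$-adjacent simplex, or a $(p+1)$-vanishing simplex in $\Gamma$, and tracking how those appearances produce and cancel facets in $\partial \tilde\psi_{a,\#}^b(\Gamma)$.
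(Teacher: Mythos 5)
Your proposal is correct and follows essentially the same route as the paper: push a preimage chain $\Gamma$ with $\partial\Gamma=\sum_J\tau_j$ forward by the one-dimension-up analog of $\psi_{a,\#}^b$ and verify commutation with $\partial$ by facet-parity and the Eulerian-graph argument on vanishing faces. Your explicit $\tilde\psi_{a,\#}^b$ with the chain-map identity $\partial\circ\tilde\psi_{a,\#}^b=\psi_{a,\#}^b\circ\partial$ is a slightly cleaner packaging of what the paper does by hand (the paper instead adjoins the vanishing cofacets of mirrored $\tau_j$ to $\Gamma$ as an ad hoc correction), and your identification of the first admissibility clause as the one controlling the heights of the added vanishing $(p+1)$-simplices matches the paper's intent.
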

\begin{proof}
Let $\Gamma$ be a $\dim(\tau_j)+1$ chain such that $\sum_J \tau_j = \partial(\Gamma)$. First, consider the case where all $\tau_j$ are the images of nonlocal simplices. Let $\Gamma'$ denote the set of facets of elements of $\Gamma$ that are the facets of an even number of elements. If all $\sigma \in \Gamma'$ are the images of nonlocal simplices, then the proof follows immediately. Hence, assume that some $\sigma \in \Gamma'$ is the image of mirrored simplices. In particular, assume that $\xiuv(\sigma_u) = \xiuv(\sigma_v) = \sigma$. Since $\sigma$ is a facet of an even number of elements of $\Gamma$, then $\sigma_u$ and $\sigma_v$ are both either the facet of an odd number or even number of elements of adjacent simplices with elements in $\Gamma$. If they are incident to an odd number, then because of how $\psi_{a,\#}^b$ is defined, their shared vanishing cofacet is included in $\psi_{a,\#}^b(\Gamma)$, and $\sigma_u$ and $\sigma_v$ are the facets of an even number of simplices in $\psi_{a,\#}^b(\Gamma)$. 

In this case, all that remains to be shown is that if $\sigma$ is a vanishing simplex which is a facet of some element of $\psi_{a,\#}^b(\Gamma)$ and which is not an element of $\psi_{a,\#}^b(\tau_j)$ for some $\tau_j$ then $\sigma$ is the facet of an even number of elements of $\psi_{a,\#}^b(\Gamma)$. This argument is analogous to the Eulerian graph argument in the previous theorem. Consider all $\dim(\tau_j)$-simplices which contain $\xiuv(\sigma)$ as a facet as nodes. An edge is added between those $\dim(\tau_j)$-simplices which are both facets of a single $\dim(\tau_j)+1$-simplex. This graph is necessarily Eulerian, which implies that $\psi_{a,\#}^b$ induces an Eulerian graph, which implies that $\sigma$ is the facet of an even number of elements of $\psi_{a,\#}(\Gamma)$. 

Now, we consider the case where one $\tau_j$ is the image of mirrored simplices. In such a case, $\psi_a^b(\tau_j)$ may be $m(\tau_j) + \sum_I \gamma_i$ where $\gamma_i$ are vanishing simplices. In such a case, it is necessary to include the vanishing cofacet of $\tau_j$ with $\Gamma$. The remainder of the proof follows as in the first case. 
\end{proof}

Hence, $\psi_{a,\#}^b$ induces a map between homology groups which we denote $\psi_a^{b,*}$. We have already shown that $\xi_a^*$ extends to $\xi_a^{b,*}$, $b = a + \epsilon$, for all $a \in \mathbb{R}$. All that remains is to show that the maps $\xi_a^{b,*}$ and $\psi_a^{b,*}$ commute with the homomorphisms given by the persistence modules as in Equation \ref{eqn:homcommute}.

\begin{center}
\begin{equation}
\label{eqn:homcommute}
\begin{tikzcd}
H_p(K_{\alpha-\epsilon}) \arrow[r] \arrow[d] & H_p(K_{\alpha' + \epsilon}) & H_p(K_{\alpha + \epsilon}) \arrow[r] & H_p(K_{\alpha' + \epsilon}) \\
H_p(K'_\alpha) \arrow[r] & H_p(K'_{\alpha'}) \arrow[u] & H_p(K'_{\alpha}) \arrow[r] \arrow[u] & H_p(K'_{\alpha'}) \arrow[u] & \\
 H_p(K_\alpha) \arrow[r] & H_p(K_{\alpha'}) \arrow[d] & H_p(K_{\alpha}) \arrow[r] \arrow[d] & H_p(K_{\alpha'}) \arrow[d] & \\
H_p(K'_{\alpha-\epsilon}) \arrow[r] \arrow[u] & H_p(K'_{\alpha' + \epsilon}) & H_p(K'_{\alpha + \epsilon}) \arrow[r] & H_p(K'_{\alpha' + \epsilon})
\end{tikzcd}   
\end{equation}
\end{center}

\begin{theorem}
If $\{u,v\}$ is $(p,\epsilon)$-admissible, then the maps $\psi_{a}^{b,*}$, $\xi_a^{b,*}$, $f_{a,b}$ and ${f'}_{a,b}$ commute as in Equation \ref{eqn:homcommute}.
\end{theorem}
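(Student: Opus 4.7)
The plan is to reduce the four sub-diagrams of Equation~\ref{eqn:homcommute} to a small number of chain-level identities. The two right-hand sub-diagrams are naturality squares expressing that $\xi$ and $\psi$ commute with the horizontal filtration inclusions. The two left-hand sub-diagrams are longer triangles which, once the naturality squares are in hand, reduce to the two identities
\[
\psi_{a}^{a+\epsilon,\,*}\circ\xi_{a}^{*} \;=\; f_{a,a+\epsilon} \ \text{on}\ H_p(K_a),
\qquad
\xi_{a+\epsilon}^{\,*}\circ\psi_{a}^{a+\epsilon,\,*} \;=\; f'_{a,a+\epsilon} \ \text{on}\ H_p(K'_a).
\]
I would handle the naturality squares and the second identity at the chain level, and the first identity via an explicit chain homotopy.

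The naturality squares are essentially formal. Both $\xi_{a,\#}$ and $\psi_{a,\#}^{a+\epsilon}$ are defined simplex-by-simplex using only the intrinsic local type of each simplex under $\{u,v\}$ (nonlocal, mirror, vanishing, or adjacent), so their prescriptions do not depend on which sublevel set of the filtration the simplex lies in. Hence their restrictions commute strictly with the inclusion chain maps, and the two naturality squares commute on the nose at the chain level. The identity $\xi\circ\psi = f'$ can also be verified at the chain level: on a nonlocal $\sigma\in K'_a$ both compositions are the identity, and on the image $\sigma = \xi(\sigma_1) = \xi(\sigma_2)$ of a mirror pair, $\psi_{a,\#}^{a+\epsilon}(\sigma)=\sigma_1 + \sum_I\sigma_i$ with each $\sigma_i$ a vanishing $p$-simplex. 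Since $\xi_\#$ annihilates vanishing $p$-simplices (their $\xi$-image has strictly smaller dimension), the composition collapses to $\xi_\#(\sigma_1)=\sigma$; the image-of-adjacent case is analogous.

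The remaining identity $\psi\circ\xi = f$ is the main obstacle, since the two composites genuinely differ on chains over mirror, vanishing, and adjacent simplices. To prove it on homology I would exhibit an explicit chain homotopy $D\colon C_p(K_a)\to C_{p+1}(K_{a+\epsilon})$ defined by $D(\sigma)=\tau_\sigma$, the unique vanishing cofacet of $\sigma$ of one dimension higher, whenever $\sigma$ is a mirror with $m(\sigma)\prec\sigma$, and $D(\sigma)=0$ otherwise. Condition~1 of $(p,\epsilon)$-admissibility guarantees $\tau_\sigma\in K_{a+\epsilon}$ for $p$-dimensional mirrors, while Condition~2 guarantees the same for the $p$-dimensional vanishing cofacets of $(p-1)$-dimensional mirrors that appear inside $D\partial$. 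I would then verify $\partial D + D\partial = \psi_{a,\#}^{a+\epsilon}\circ\xi_{a,\#} + f_{a,a+\epsilon,\#}$ in $\mathbb{Z}_2$ by case analysis on the simplex type of $\sigma$: for a mirror pair $\sigma_1\prec\sigma_2$ with shared vanishing cofacet $\tau$, the vanishing $p$-facets appearing in $\partial\tau = \sigma_1 + \sigma_2 + \sum_j \nu_j$ cancel in $\mathbb{Z}_2$ against the contributions of $D$ on exactly those mirror $(p-1)$-facets $\eta$ of $\sigma_2$ for which $m(\eta)\prec\eta$, and what remains matches exactly the ``$\sigma_1+\sum_I\sigma_i$'' formula used in $\psi$, since the same selection rule $m(\eta)\prec\eta$ is built into both $\psi$ and $D$. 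The vanishing case (where $\xi_\#=0$ and the discrepancy $\sigma$ is reproduced by $D$ on exactly one of $\sigma$'s two mirror facets) and the adjacent case (where $\psi$ adds vanishing cofacets of mirror facets, matched term-by-term by $D$ on the same facets) are handled similarly.

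The hardest part of the proof is this last case analysis, and specifically lining up the two $m(\eta)\prec\eta$ selection rules---the one built into $\psi$'s formula and the one built into $D$---so that the parities of $\partial D(\sigma)$, $D(\partial\sigma)$, and $(\psi\xi+f)(\sigma)$ agree on every simplex type. The required parity bookkeeping is essentially the same Euler-circuit style argument already used in the proof of Theorem~\ref{thm:cycle}, now adapted from cycle/boundary preservation to the verification of a chain-homotopy identity, and the two admissibility conditions are precisely what is needed to keep $D$ within the $\epsilon$-shifted subcomplex $K_{a+\epsilon}$.
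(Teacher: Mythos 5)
Your proposal is correct and follows essentially the same route as the paper: the paper's (very terse) proof asserts that the chain maps agree except on local simplices, where the discrepancy is the boundary of a shared vanishing cofacet that $(p,\epsilon)$-admissibility guarantees lies in $K_b$ --- and your chain homotopy $D(\sigma)=\tau_\sigma$ is precisely the formalization of that assertion, with the same two admissibility conditions playing the same roles. Your write-up is considerably more complete than the paper's two-sentence argument (explicitly separating the naturality squares, the chain-level identity $\xi_\#\circ\psi_\#=\mathrm{id}$, and the homotopy $\partial D+D\partial=\psi_\#\xi_\#+f_\#$), but it is the same underlying idea rather than a different approach.
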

\begin{proof}
The commutation is immediate, as the chain maps commute except for on mirrored simplices, where if $v < \tau <_1 \sigma$ $\psi_a^{b,*}$ may bring $\xiuv(\tau)$ to $\partial \sigma \setminus \tau$. But then the cycles differ by a boundary, as the link condition requires that each pair of mirrors share a vanishing cofacet, and $b$ is chosen such that the shared cofacet is guaranteed to be in $K_b$.  Hence, the induced homology maps commute. 
\end{proof}

\begin{corollary}
If $\{u,v\}$ satisfies the link condition and is $(p,\epsilon)$-admissible, then $\xiuv$ is $(p,\epsilon)$-stable. 
\end{corollary}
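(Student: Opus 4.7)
The plan is to observe that this corollary is an assembly step: essentially all the real work has been done by the preceding theorems of Section \ref{sec:stable}, and what remains is to invoke Theorem \ref{thm:bint}. By definition, $(p,\epsilon)$-stability of $\xiuv$ asserts $d_b(\dgm_p(M_A),\dgm_p(M'_A)) \leq \epsilon$. Since $K$ is finite, both $M_A$ and $M'_A$ are tame, and extend canonically to tame persistence modules $M_\mathbb{R}$ and $M'_\mathbb{R}$ indexed over the reals. By Theorem \ref{thm:bint}, it thus suffices to exhibit a strong $\epsilon$-interleaving between $M_\mathbb{R}$ and $M'_\mathbb{R}$.

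I would take the families $\{\xi_\alpha^{\alpha+\epsilon,*}\}_{\alpha\in\mathbb{R}}$ and $\{\psi_\alpha^{\alpha+\epsilon,*}\}_{\alpha\in\mathbb{R}}$ already constructed in this section as the interleaving maps. Well-definedness of $\xi_\alpha^{\alpha+\epsilon,*}$ was established by the cycle lemma together with Theorem \ref{cor:bdry} (both of which use only the link condition to handle the mirrored/vanishing parity), while well-definedness of $\psi_\alpha^{\alpha+\epsilon,*}$ was established by Theorem \ref{thm:cycle} and its boundary counterpart, for which the role of $(p,\epsilon)$-admissibility is precisely to guarantee that shared vanishing cofacets of mirrored faces live in the appropriately shifted sublevel set. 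The four diagrams required by the definition of strong $\epsilon$-interleaving (Equation \ref{eqn:commute}) coincide with the four diagrams of Equation \ref{eqn:homcommute}, whose commutation was verified by the preceding theorem under the $(p,\epsilon)$-admissibility hypothesis.

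To pass from the discrete index set $A$ to all of $\mathbb{R}$, I would use the canonical extension in which $M_\alpha$ equals $M_j$ for the largest $j\in A$ with $j\leq\alpha$, and analogously for $M'_\mathbb{R}$. The maps $\xi$, $\psi$, $f$, and $f'$ extend consistently, and the interleaving diagrams at any real $\alpha\leq\alpha'$ collapse to instances already verified at discrete indices. Applying Theorem \ref{thm:bint} then yields $d_b(\dgm_p(M_\mathbb{R}),\dgm_p(M'_\mathbb{R}))\leq\epsilon$, which is the same as $d_b(\dgm_p(M_A),\dgm_p(M'_A))$, giving $(p,\epsilon)$-stability.

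The main ``obstacle'' is in fact not an obstacle but a verification: one must confirm that the hypotheses invoked in each prior theorem (link condition for $\xi_\alpha^{\alpha+\epsilon,*}$; link condition plus $(p,\epsilon)$-admissibility for $\psi_\alpha^{\alpha+\epsilon,*}$ and for the commuting squares) are all supplied by the corollary's assumptions. Both conditions in the statement match precisely, so the corollary follows in a few lines once the preceding results and Theorem \ref{thm:bint} are cited. The genuine technical content of $(p,\epsilon)$-stability therefore sits in the parity arguments of Theorem \ref{thm:cycle} and in the commutation proof preceding this corollary, rather than in this final assembly.
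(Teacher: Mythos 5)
Your proposal is correct and matches the paper's intended argument: the corollary is stated without proof precisely because it is the assembly step you describe, combining the well-definedness of $\xi_\alpha^{\alpha+\epsilon,*}$ and $\psi_\alpha^{\alpha+\epsilon,*}$, the commutation theorem for Equation \ref{eqn:homcommute}, and Theorem \ref{thm:bint}. Your additional care with the canonical extension from the discrete index set $A$ to $\mathbb{R}$ is consistent with the paper's earlier remark and fills in a detail the paper leaves implicit.
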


\subsection{Multiple Contractions}
The existence of $(p,\epsilon)$-stable contraction gives rise to the question of bounding interleaving distance across multiple contractions. Consider $(p,\epsilon)$-admissible edges, $\{u_1,v_1\}$,  $\{u_2,v_2\}$, $\{u_3,v_3\}$, $\ldots$, $\{u_n,v_n\}$. Specifically, $\{u_2,v_2\}$ is $(p,\epsilon)$-admissible in the complex $\xi_{\{u_1,v_1\}}(K)$, $\{u_3,v_3\}$ is $(p,\epsilon)$-admissible in the complex $\xi_{\{u_2,v_2\}} \circ \xi_{\{u_1,v_1\}}(K)$, and so on. Contracting these edges sequentially gives a sequence of $p$-dimensional persistence modules $M_0, M_1, \ldots, M_n$. Naively applying the triangle inequality implies that $d_b(\dgm_p(M_0), \dgm_p (M_n)) \leq n\epsilon$. We aim to find conditions under which $d_b(\dgm_p(M_0),\dgm_p(M_n)) \leq \epsilon$. In this section, we will use $K^{(i)}$ to denote the complex that results from contracting the first $i$ edges, and will use $f^{(i)}$ to refer to the inclusion homomorphisms in $M_i$. 

For each $(p,\epsilon)$-admissible edge $e \in K$, we define a $p$-\textit{window} $W_p(e) \subset \mathbb{R}$.  If $0 < p < \dim(K)$, let $r$ be the $p-1$ mirror relative to $e$ with minimal height value, and $s$ be the $p+1$ simplex with maximal height value. Then $W_p(e) = [f(r), f(s)]$. If $p=0$, then $W_p(e) = [f(r),f(e)]$. Let $z$ denote the maximum vanishing $p$-simplex relative to $\{u,v\}$. If $p = \dim(K)$, then $W_p(e) = [f(r), f(z)]$. 

\begin{definition}
The sequence of $(p,\epsilon)$-admissible edges $\{u_1,v_1\} \in K$, $\{u_2,v_2\} \in K'$, $\ldots$, $\{u_n,v_n\} \in K^{(n-1)}$ is $(p,\epsilon)$-compatible if the intervals $W_p(\{u_1,v_1\})$, \ldots, $W_p(\{u_n,v_n\})$ are disjoint.
\end{definition}
This definition permits us to bound the interleaving distance between the initial and final persistence modules.
Downward arrows are now given by the composition of contraction maps. Each given $\{u_i,v_i\}$ has a different collection of $\psi_a^b$, so we use the notation $\psi_{i,a}^b \; : \; H_p(K_a^{(i)}) \to H_p(K_b^{(i-1)})$ to distinguish them. Note that if $W_p(\{u_i,v_i\}) = [r,s]$, then $H_p(K_j^{(i-1)}) \cong H_p(K_j^{(i)})$ provided $j \not\in [r,s)$. It is easy to see that careful composition of $\psi_{i,a}^b$ with these isomorphisms defines a new collection of upward maps $\Psi_a^b \; : \; H(K_a^{(n)}) \to H(K_b)$, $b = a + \epsilon$. Similarly, through composition of contraction and inclusion maps, we get $\Xi_a^b \; : H_p(K_a) \to H_p(K_b^{(n)})$. It is clear that both collections of maps inherit the boundary to boundary and cycle to cycle properties from the single-contraction case. Hence, the following theorem follows immediately. 

\begin{center}
\begin{equation}
\label{eqn:homcommutebig}
\begin{tikzcd}
H_p(K_{\alpha-\epsilon}) \arrow[r] \arrow[d] & H_p(K_{\alpha' + \epsilon}) & H_p(K_{\alpha + \epsilon}) \arrow[r] & H_p(K_{\alpha' + \epsilon}) \\
H_p(K^{(n)}_\alpha) \arrow[r] & H_p(K^{(n)}_{\alpha'}) \arrow[u] & H_p(K^{(n)}_{\alpha}) \arrow[r] \arrow[u] & H_p(K^{(n)}_{\alpha'}) \arrow[u] & \\
 H_p(K_\alpha) \arrow[r] & H_p(K_{\alpha'}) \arrow[d] & H_p(K_{\alpha}) \arrow[r] \arrow[d] & H_p(K_{\alpha'}) \arrow[d] & \\
H_p(K^{(n)}_{\alpha-\epsilon}) \arrow[r] \arrow[u] & H_p(K^{(n)}_{\alpha' + \epsilon}) & H_p(K^{(n)}_{\alpha + \epsilon}) \arrow[r] & H_p(K^{(n)}_{\alpha' + \epsilon})
\end{tikzcd}   
\end{equation}
\end{center}

\begin{theorem}
Let $\{u_1,v_1\} \in K$, $\{u_2,v_2\} \in K'$, $\ldots$, $\{u_n,v_n\} \in K^{(n-1)}$ be $(p,\epsilon)$-compatible edges. The maps $\Psi_a^b$, $\Xi_a^b$, $f$, and $f^{(n)}$ commute as in Equation \ref{eqn:homcommutebig}. 
\end{theorem}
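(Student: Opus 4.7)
The plan is to bootstrap the single-contraction commutativity proved in the previous theorem by exploiting the disjoint-window hypothesis of $(p,\epsilon)$-compatibility. First I would unpack the construction of $\Psi_a^b$ and $\Xi_a^b$: each is a telescoping composition of the single-step maps $\psi_{i,\cdot}^{\cdot,*}$ and $\xi_{i,\cdot}^{\cdot,*}$ interleaved with inclusion homomorphisms $f^{(i)}$ in the intermediate modules $M_i$, composed with the natural isomorphisms $H_p(K_j^{(i-1)}) \cong H_p(K_j^{(i)})$ available whenever $j \notin W_p(\{u_i,v_i\})$. These isomorphisms exist precisely because, outside its window, an admissible contraction neither creates nor destroys a $p$-dimensional homology class visible at level $j$, which is what the definition of $W_p$ is designed to capture.

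Next, I would fix $\alpha \leq \alpha'$ and examine the two squares of Equation \ref{eqn:homcommutebig} separately. The disjointness of the windows $W_p(\{u_1,v_1\}), \ldots, W_p(\{u_n,v_n\})$ allows me to linearly order them and subdivide the interval $[\alpha-\epsilon, \alpha'+\epsilon]$ into subintervals, each meeting at most one window. On any subinterval not meeting any window, all of the $\psi_{i,\cdot}^{\cdot,*}$ and $\xi_{i,\cdot}^{\cdot,*}$ reduce to isomorphisms compatible with the inclusion maps, so commutativity is automatic. On a subinterval meeting exactly one window $W_p(\{u_i,v_i\})$, only the $i$-th contraction acts nontrivially, and the local commutativity is precisely the content of the preceding single-contraction theorem applied to $\{u_i,v_i\} \in K^{(i-1)}$.

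Finally, I would glue the local diagrams together. Because each upward/downward step in $\Psi_a^b$ and $\Xi_a^b$ is built from single-step maps that inherit the cycle-to-cycle and boundary-to-boundary properties established in Theorems \ref{thm:cycle} and \ref{cor:bdry}, composition preserves both properties, so $\Psi_a^{b,*}$ and $\Xi_a^{b,*}$ are well-defined on homology. The commutativity of each square in Equation \ref{eqn:homcommutebig} then follows by concatenating the subinterval-level diagrams along the chain of intermediate modules $M_0, M_1, \ldots, M_n$.

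The main obstacle I anticipate is the bookkeeping at the boundaries of the windows, where the ``outside-window'' isomorphism must be glued to the ``inside-window'' nontrivial contraction map without breaking commutativity. The disjointness hypothesis is exactly what makes this gluing unambiguous: at no level does more than one contraction act nontrivially, so we never have to reason about the interaction between two admissibility conditions simultaneously. Once the single-contraction theorem is applied locally on each window and the isomorphisms on the complementary intervals are identified with identities on $H_p$, the global diagram commutes by composition, yielding the claim.
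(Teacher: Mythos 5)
Your proposal matches the paper's approach: the paper gives no explicit proof, asserting the theorem "follows immediately" from the preceding construction, which is exactly the argument you spell out — composing the single-contraction maps $\psi_{i,a}^b$ and $\xi_{i,a}^b$ with the isomorphisms $H_p(K_j^{(i-1)}) \cong H_p(K_j^{(i)})$ available outside each window, using disjointness so that at most one contraction acts nontrivially at any level, and inheriting the cycle-to-cycle and boundary-to-boundary properties from the single-contraction case. Your version is, if anything, more detailed than the paper's.
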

\section{Experiments}
\label{sec:exps}

In this section, we implement our contraction operator and demonstrate its utility on some manifolds. Note that the notion of $(p,\epsilon)$-compatibility lends itself to an elementary scheduling problem. At each stage, we aim to contract as many edges as possible while ensuring that the bottleneck distance between the previous and resulting persistence diagrams remains $\leq \epsilon$. Hence, by the triangle inequality, the bottleneck distance between persistence diagrams for the initial and contracted complexes is $\leq m\epsilon$, where $m$ is the number of stages. In this section, we will contract sets of edges which are $(1,0.5)$-compatible and $(1,5.0)$-compatible at each stage. 

\begin{figure}\centering
  \frame{\includegraphics[scale=.175]{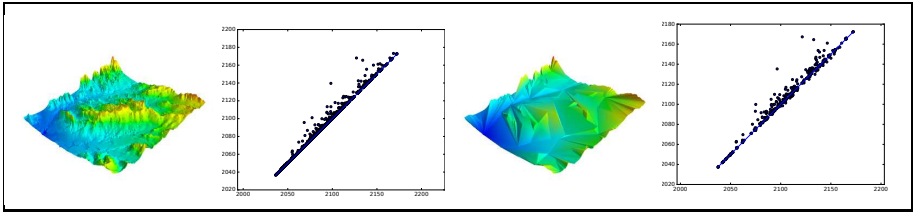}}
  \caption{A triangulated terrain (far left), the $1$-dimensional persistence diagram corresponding to the terrain (center left), the complex after approximately 450,000 edge contractions resulting in a 99\% reduction in the number of simplices (center right), the $1$-dimensional persistence diagram corresponding to the contracted complex (right). The bottleneck distance between the two persistence diagrams is 6.540.}
  \label{fig:contEx}
\end{figure}
\begin{figure}\centering
 \frame{\includegraphics[scale=.28]{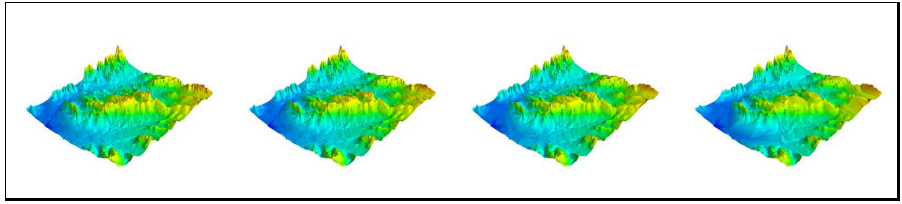}}
  \caption{The Los Alamos dataset after, from left to right, 100,000, 200,000, 300,000, and 400,000 $(1,5.0)$-admissible edge contractions.}
  \label{fig:contGrad}
\end{figure}
\begin{figure}\centering
  \frame{\includegraphics[scale=.28]{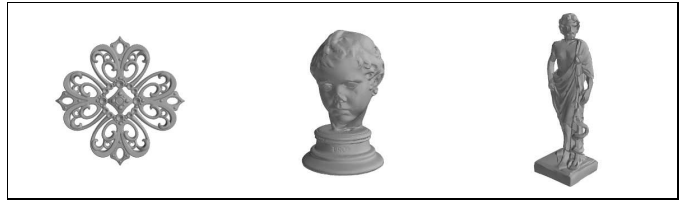}}
  \caption{The non terrain models: Filigree (genus 65), Eros (genus 0), and Statue (genus 4).}
  \label{fig:models}
\end{figure}

\begin{table*}
\small
\centering
 \begin{tabular}{||l | c | c | c | c | c | r||} 
 \hline
 Dataset & Init. Simplices & Contractions & Its. & Rem. Simps & \% Red. & $d_b$\\ [0.5ex] 
 \hline\hline
 Columbus & 2,728,353 & 450,835 & 6,092 & 23,343 & 99.14 & 0.601\\ 
 \hline
 Los Alamos & 2,728,353 & 449,181 & 7,840 & 33,267 & 98.78 & 1.370\\
 \hline
 Minneapolis & 10,940,401 & 1,818,560 & 66,627 & 29,041 & 99.73 & 1.106 \\
 \hline
 Aspen & 10,940,401 & 1,806,640 & 6,231 & 100,561 & 99.08 & 1.115\\
 \hline
 Filigree & 3,086,440 & 438,925 & 85,116 & 452,890 & 85.32 & 0.916\\
 \hline
 Eros & 2,859,566 & 475,401 & 157,636 & 7,160 & 99.75 & 0.489\\
 \hline
 Statue & 14,994 & 2,470 & 945 & 174 & 98.84 & 0.074\\
 \hline
\end{tabular}
\vspace{11pt}
\caption{Contraction data for seven datasets when contracting $(1,0.5)$-compatible edges. Iterations corresponds to the number of sets of compatible edges that were contracted. Despite many iterations, note that the bottleneck distance between the $1$-dimensional persistence diagrams remains comparatively close to $\epsilon = 0.5$.}
\label{tabl:contDataHalf}
\end{table*}

\begin{table*}
\small
\centering
 \begin{tabular}{||l | c | c | c | c | c | r||} 
 \hline
 Dataset & Init. Simplices & Contractions & Its. & Rem. Simplices & \% Red. & $d_b$\\ [0.5ex] 
 \hline\hline
 Columbus & 2,728,353 & 452,914 & 6,123 & 10,869 & 99.60 & 2.097\\ 
 \hline
 Los Alamos & 2,728,353 & 452,710 & 7,350 & 12,093 & 99.56 & 6.540\\
 \hline
 Minneapolis & 10,940,401 & 1,819,786 & 66,981 & 21,685 & 99.80 & 4.832 \\
 \hline
 Aspen & 10,940,401 & 1,818,593 & 5,664 & 28,843 & 99.74 & 8.570\\
 \hline
 Filigree & 3,086,440 & 466,395 & 90,009 & 288,070 & 90.67 & 9.051\\
 \hline
 Eros & 2,859,566 & 476,573 & 158,426 & 128 & 99.99 & 4.296\\
 \hline
 Statue & 14,994 & 2,470 & 945 & 174 & 98.84 & 0.074\\
 \hline
\end{tabular}
\vspace{11pt}
\caption{Contraction data for seven datasets when contracting $(1,5.0)$-compatible edges. As in the $(1,0.5)$ case, the bottleneck distance between the $1$-dimensional persistence diagrams remains comparatively close to $\epsilon = 5.0$.}
\label{tabl:contData5}
\end{table*}

We consider four terrains, obtained from the National Elevation Dataset, and $3$ models, obtained from the Aim@Shape repository. The terrains are from near Aspen, Colorado; Minneapolis, Minnesota; Los Alamos, New Mexico; and Columbus, Ohio. A height function is defined over the models' vertex set by a surface curvature approximation. For each manifold, we find a maximal set of $(1,\epsilon)$-compatible edges, contract them all, and repeat the process until there are no remaining $(1,\epsilon)$-admissible edges. An example of the contraction process and persistence diagrams can be seen in Figures \ref{fig:contEx} and \ref{fig:contGrad}. In Tables \ref{tabl:contDataHalf} and \ref{tabl:contData5} we show the resulting data for $\epsilon = 0.5$ and $\epsilon = 5.0$, respectively.

In general, the bottleneck distance is much less than $m\epsilon$. This raises an issue of the tightness of our multiple contraction scheme, which we leave to future work. 
\section{Conclusion}
We conclude with a short discussion on directions for future research. It appears that the conditions for $(p,\epsilon)$-compatible edges are very conservative, and could possibly be expanded to permit further contraction. In addition, there are a variety of other operations that could be applied to simplicial complexes which may be persistence-sensitive. Based on the theory of strong homotopy~\cite{Barmak}, Boissonat \textit{et al.} have recently invented a quick way to compute persistent homology by simplifying the input complex with strong collapses \cite{Boisson}. A natural problem is to develop an elementary collapse operator that controls the perturbation in the persistence diagrams. Other such operations worthy of investigation include vertex removal, or arbitrary vertex identification in a CW-complex.

\subsection{Acknowledgments}

The authors would like to thank the National Elevation Dataset for their terrain data, the Aim@Shape repository for the models, and the Hera project for their bottleneck distance code \cite{Kerber}. In addition, the authors are grateful for the comments of the anonymous reviewers. This work was supported by NSF grants CCF-1740761, DMS-1547357 and CCF-1839252.

\bibliographystyle{abbrv}
\bibliography{refs}
\end{document}
